\documentclass[journal]{IEEEtran}
\usepackage{amsmath}
\usepackage{amsfonts}
\usepackage{amsthm}
\usepackage{amssymb}
\usepackage{algcompatible}
\usepackage[linesnumbered,ruled]{algorithm2e}
\usepackage{cite}
\usepackage{dsfont}
\usepackage{pgfgantt}
\newtheorem{theorem}{Theorem}
\newtheorem{lemma}{Lemma}
\newtheorem{corollary}{Corollary}
\usepackage{float}
\usepackage{hyperref}
\usepackage{acronym}
\usepackage[caption=false,font=normalsize,labelfont=sf,textfont=sf]{subfig}
\graphicspath{{Schematics/}} 
\usepackage{xcolor}
\usepackage{hyperref}
\usepackage{subfig}
\usepackage{glossaries}
\usepackage{acronym}

\newacronym{ULA}{ULA}{uniform linear array}
\newacronym{UPA}{UPA}{uniform planar array}
\newacronym{TTD}{TTD}{true-time-delay}
\newacronym{HDB}{HDB}{Homomorphic Directional Beamforming}
\newacronym{OFDM}{OFDM}{orthogonal frequency-division multiplexing}
\newacronym{UE}{UE}{user equipment}
\newacronym{DTFT}{DTFT}{Discrete Time Fourier Transform}
\newacronym{ASE}{ASE}{Average Spectral Efficiency}

\title{Homomorphic Directional Beamforming with Analog True Time Delay Arrays \\
\thanks{This work was supported by the National Science Foundation awards 1955672 and 2224322.}}

\author{\IEEEauthorblockN{Ibrahim Pehlivan, and Danijela Cabric	}
		
\IEEEauthorblockA{\textit{Electrical and Computer Engineering Department,} \\
\textit{University of California, Los Angeles}\\
Emails: ipehlivan@ucla.edu, danijela@ee.ucla.edu }
}

\begin{document}

\maketitle
\begin{abstract}
 
Recently, \gls{TTD} arrays, also referred to as joint phase-time arrays (JPTA), have been investigated for low-cost frequency-dependent beamforming capabilities to enable various applications, including beam-squint correction, fast beam training, and serving multiple \gls{UE}s by frequency band to direction mapping, termed as split beampatterns. Several heuristics and optimization-based solutions have been proposed to determine \gls{TTD} array parameters settings. However, they have practical limitations due to either computationally demanding optimization procedures, requirements for extremely large memory look-up tables, or degradations due to the beam-squint effect. In this article, we propose a novel split-beampattern generation algorithm based on the observed homomorphism between \gls{TTD} array configuration matrices and corresponding beampatterns. First, we rigorously analyze the beampattern synthesis process and demonstrate the observed homomorphism and mathematical structure. Then, we propose the \gls{HDB} algorithm to approximate the desired split beampatterns by utilizing a generator beampattern dictionary that requires a dictionary size orders of magnitude lower than existing approaches without ignoring the beam squint. With extensive simulations, we show that the proposed algorithm can provide a practical implementation with low memory and low computational cost requirements. In addition, \gls{HDB} design provides close to uniform beamforming gains among \gls{UE}s in different subbands, enabling fairness in power allocation.
\end{abstract}
\section{Introduction}

\begin{figure}[t!]
\centering
\includegraphics[width=0.5\textwidth]{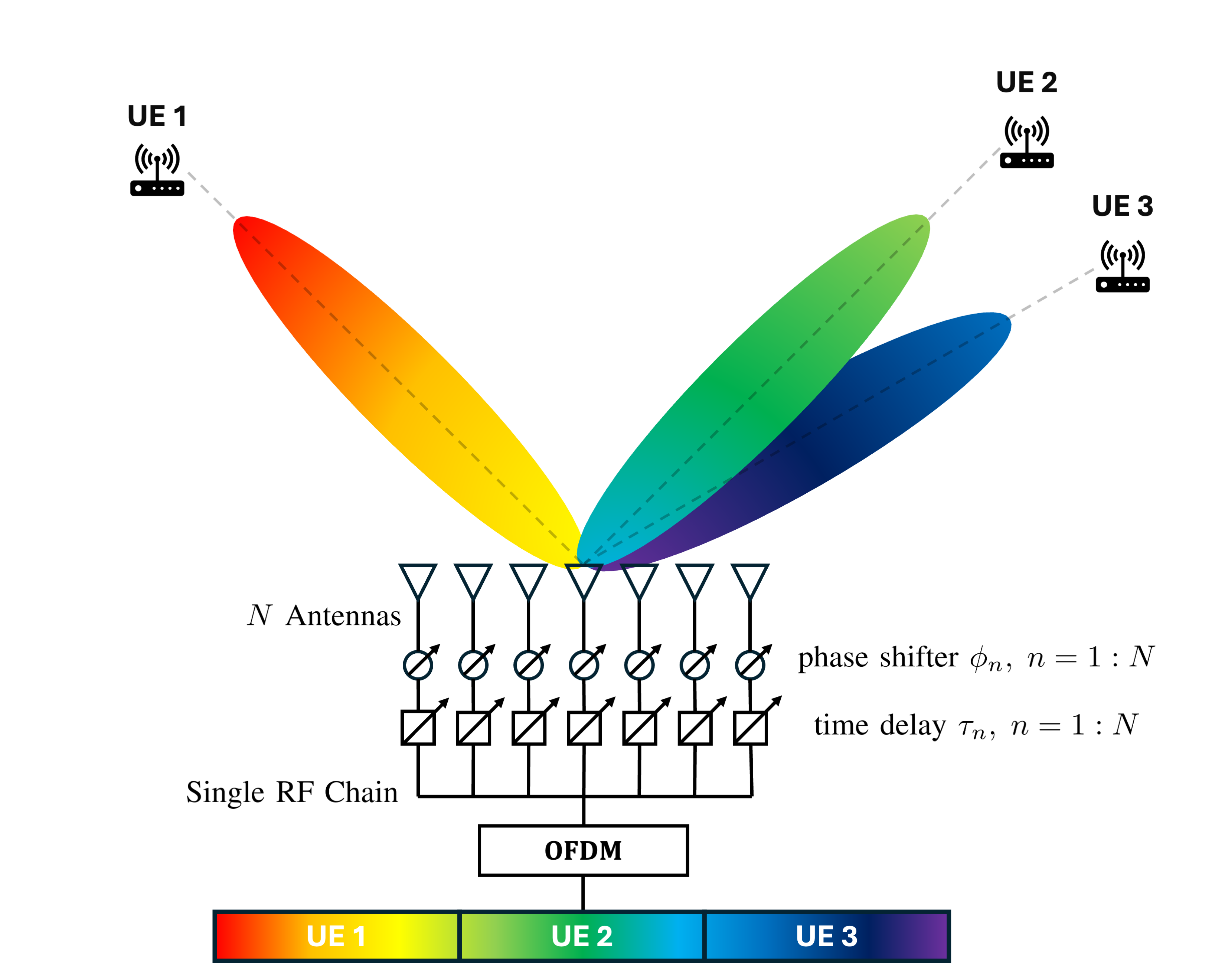}
\caption{Frequency-dependent beamforming with \gls{TTD} array: Split beampatterns that serve $3$ \gls{UE}s simultaneously with single RF chain by subband to \gls{UE} (direction) assignment.}
\label{fig:split_beam_system}
\end{figure}

Future wireless systems are expected to enable various wireless services ranging from virtual reality (VR) to ubiquitous connectivity, with stringent data rate, latency, and connectivity requirements \cite{Motivation_singh,samsung2025vision}. Enabling technologies are expected to utilize millimeter wave and sub-THz frequency bands to exploit the abundant bandwidth. However, these frequency bands exhibit high attenuation loss and require beamforming gain of the multi-antenna systems to enable viable communication links. Considering stringent power and cost limitations, analog antenna array architecture with frequency-independent phase shifters has been considered a predominant candidate in millimeter-wave systems \cite{25Years_array_Bjornson}. However, analog arrays can only create a single frequency-independent beam per given time slot, limiting the scheduling capabilities of the wireless systems. Furthermore, as the bandwidth and number of antennas increase, analog phase antenna arrays suffer from the frequency-dependent array response \cite{spatial_wideband}, referred to as beam squint. 

Recently, analog \gls{TTD} circuit elements have been proposed to enable analog frequency-dependent beamforming capabilities \cite{TTD_longbrake,TTD_rothman}, where introduced delay in the time domain provides the ability to create frequency-dependent phase shift in the frequency domain. This frequency-dependent phase-shifting capability has been utilized to eliminate the frequency dependency of the array response by re-aligning all frequencies towards the same direction \cite{delay_phase, TTD_longbrake, TTT_rainbow_hardware_WSU_II}. Instead of compensating for the beam squinting effect, \gls{TTD} elements can also increase the dispersion of the array response across different frequencies, creating so-called \textit{rainbow beampattern}. Rainbow beams have been proposed to reduce the beam training overhead by probing all possible \gls{UE} directions simultaneously for \gls{ULA} \cite{TTD_rainbow_Jans,TTD_rainbow_Han,TTD_rainbow_Veljko,TTT_rainbow_hardware_WSU,TTD_rainbow_jans_II,TTT_rainbow_hardware_WSU_II} and for \gls{UPA} \cite{3D_rainbow}. The array response dispersion is also utilized to facilitate low latency massive connectivity \cite{TTD_rainbow_link} and coverage extension with beam spreading \cite{TTD_Thzprism}.

In addition to dispersive beams, analog \gls{TTD} arrays have also been utilized to simultaneously serve multiple \gls{UE}s at different directions based on subband to direction multiplexing with \textit{split beampatterns}\cite{jpta}, as shown in Fig. \eqref{fig:split_beam_system}. The flexible spectrum utilization of split beams is currently being considered as a candidate paradigm for future 6G standards \cite{samsung2025vision} and has been shown to enhance uplink coverage \cite{JPTA_coverage}, improve edge user throughput \cite{JPTA_system} and enable low-latency operation \cite{mmflexible}. However, the aforementioned applications and increased degrees of freedom introduced by time delay elements bring additional algorithm design challenges that require careful analysis.

Several optimization-based beam-pattern generation algorithms have been proposed to enable flexible sub-carrier to direction assignment that is capable of beam-squint elimination, rainbow beamforming, and split beam realization. In \cite{jpta,mmflexible}, authors optimized the sum power of the \gls{OFDM} symbol for different subcarriers for \gls{ULA} arrays, proposing both exhaustive search-based and convex approximation-based solutions. \cite{mmflexible} proposes a closed-form solution for split beampatterns, where partitions of subcarriers are assigned to different directions while ignoring the beam-squint effect. The over-the-air (OTA) verifications of the aforementioned optimization-based algorithms are provided for $28$ GHz \cite{JPTA_demo} and sub-$6$ GHz \cite{mm_flexible_demo} systems. Furthermore, for \gls{UPA} arrays, optimization of the max-min power, in addition to sum power, is developed in \cite{3D_JPTA} to provide fairness among \gls{UE}s. However, optimization-based schemes either require prohibitive time complexity, memory \cite{jpta}, or ignore the beam-squint effect \cite{mmflexible}; making the online deployment for wideband systems impractical.

Several heuristic algorithms are proposed for computationally fast split beam generation; in \cite{modulo}, authors propose a closed-form configuration to serve $2$ \gls{UE}s by assigning each \gls{UE} half of the frequency band and \cite{structured} proposed a structured method to serve multiple \gls{UE}s that are uniformly distributed in direction domain. However, these closed-form methods focus on specific direction-sub-carrier assignments and have limitations for practical applications with diverse requirements on the number of users and their spatial directions.

In this article, we introduce a novel mathematical framework for computationally fast and memory-efficient split beampattern synthesis using analog TTD arrays. We first show the existence of a homomorphism between the array configuration matrix, i.e., time delay and phase shifter values per antenna and corresponding beampatterns. Utilizing this mathematical structure, we show that hard-to-approximate beampatterns, such as split beampatterns serving multiple \gls{UE}s, can be written in terms of simple-to-approximate generator beampatterns, allowing fast and efficient split beam synthesis in a divide-and-conquer manner. Our contributions are summarized as follows:\\

\begin{itemize}
    \item We rigorously analyze the mathematical structure of the \gls{TTD} beampattern synthesis and show the homomorphism between \gls{TTD} array configuration matrices and corresponding beampatterns.
    \item We propose a novel low-memory and low-complexity algorithm that approximates split beampatterns using a single generator beampattern dictionary. 
    \item We demonstrate the effectiveness of the proposed method with extensive simulations and provide a detailed comparison with the state-of-the-art algorithms.\\
\end{itemize}

The remainder of the paper is organized as follows: We introduce the system model and mathematical definitions in section \eqref{Section: System model}, define the optimization problem in section \eqref{Section: Problem Definition}, and analyze the mathematical structure of the beampattern synthesis operation in section \eqref{Section: Beampattern Groups and Homomorphism}. Then, we propose a split beampattern synthesis algorithm in section \eqref{Section: Split Beampattern Approximation} and present the extensive performance analysis in section \eqref{Section: Performance Analysis}. Finally, we discuss the properties of the proposed algorithm and future directions in section \eqref{Section: Discussion and Future Directions} and Section \eqref{Section: Conclusion} concludes the paper.

\section{System Model} \label{Section: System model}

We consider a wireless system comprising a base station with a single RF chain and $N$ antennas in \gls{ULA} configuration with half wavelength $\frac{\lambda_c}{2}$ antenna spacing at the carrier frequency $f_c$. Each antenna element employs a phase shifter with phase shifter value $\phi_n$ and a \gls{TTD} unit with a time delay value $t_n$ \cite{jpta,mmflexible}. $M$ subcarrier \gls{OFDM} system with bandwidth $BW$ is employed for the data communication where each subcarrier $m=1:M$ has the frequency $f_m=f_c+m(BW/M)-(BW/2)$. 

The resulting frequency dependent precoding vector $\mathbf{v}_{m}\in \mathbb{C}^{N \times 1}$ for the sub-carrier $m \in 1:M$ and for time delay $t_n$ and phase shift $\phi_n$ per antenna $n\in 1:N$ can be defined as follows:

\begin{equation}
    [\mathbf{v}_{m}]_{n}=\dfrac{1}{\sqrt{N}}e^{j(-2\pi f_m t_n+\phi_n)} \in \mathbb{C}^N,\;m\in1:M
\end{equation}

and the precoding matrix is defined as:

\begin{equation}
\mathbf{V}= \begin{bmatrix}
    \mathbf{v}_{1} &\dots & \mathbf{v}_{M}
\end{bmatrix} \in \mathbb{C}^{N \times M} \label{Definition: Precoder matrix}
\end{equation}

where the corresponding array response for a given $t_n$ and phase shift $\phi_n$, per antenna $n\in 1:N$ and for the sub-carrier $m \in 1:M$ is given as\cite{jpta}:

\begin{equation}
    \begin{aligned}
        &P_{m}(\Psi)=\sum_{n=0}^{N-1} \frac{1}{\sqrt{N}}e^{j(-2\pi f_m t_n+\phi_n)} e^{-jn\pi\Psi \frac{f_m}{2f_c}}\nonumber\\
        &P_{m}(\Psi)=\sum_{n=0}^{N-1} \frac{1}{\sqrt{N}}e^{j(-2\pi f_m t_n+\phi_n)} e^{-jn\Omega_{m}}\nonumber\\
        &=\mathcal{F}\{\dfrac{1}{\sqrt{N}}e^{j(-2\pi f_m t_n+\phi_n)}\}_{(\Omega_m)}\nonumber\\
        &:=\mathcal{F}_m\{\dfrac{1}{\sqrt{N}}e^{j(-2\pi f_m t_n+\phi_n)}\} \label{Def: F_m}
    \end{aligned}
\end{equation}

where $\Psi=\sin(\theta)$ and $\mathcal{F}_m$ is the \gls{DTFT} with the frequency variable $\Omega_m=\pi\Psi f_m/(2f_c)$, i.e scaled version of \gls{DTFT} with the frequency variable $\pi \Psi$.

\begin{figure}[!t]
\centering
  \includegraphics[width=0.5\textwidth]{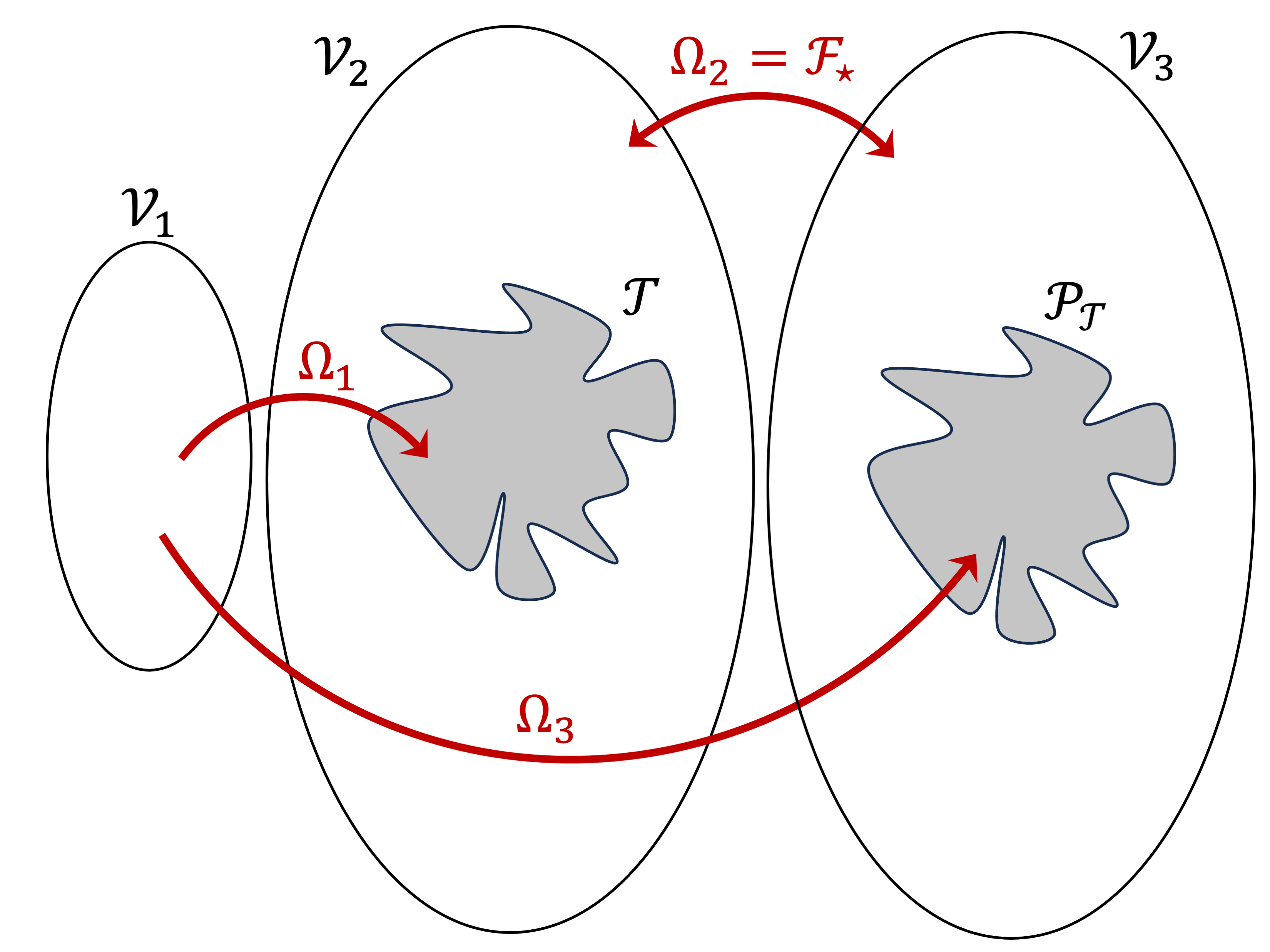}
  \caption{Beampattern generation process: sets and corresponding mappings. $\mathcal{V}_1$ is the set of array configuration matrices, $\mathcal{V}_2$  is the set of beamforming vectors and $\mathcal{V}_3$ is the set of beampatterns.}
  \label{fig:Mapping_Func}
\end{figure}
\subsection{TTD Array and Beampattern Set Mappings}

 We start by defining the necessary mathematical tools to analyze the \gls{TTD} beamforming process. Denote the frequency vector $\mathbf{f}\in \mathbb{R}^{M}$ as $[\mathbf{f}]_{(m)}=f_m, m\in 1:M$, time delay and phase shifter vectors as $[\mathbf{t}]_{(n)}=t_n,\;[\boldsymbol{\phi}]_{(n)}=\phi_n, n\in 1:N$ respectively and array configuration matrix as $\boldsymbol{\Phi}=\begin{bmatrix}
     \mathbf{t} & \boldsymbol{\phi}
 \end{bmatrix}\in \mathbb{C}^{N \times 2}$ and define the following operators:

  \begin{subequations}
  \begin{align}
      &\Omega_1(\boldsymbol{\Phi}): \mathcal{V}_1 \to \mathcal{V}_2:= \dfrac{1}{\sqrt{N}}e^{j(-2\pi \mathbf{t}^H \mathbf{f}+\boldsymbol{\phi})}:=\mathbf{V_{\Phi}}\label{Def: Omega1}\\
      &\Omega_2(\mathbf{V})=\mathcal{F}_{\star}(\mathbf{V}):= \mathcal{V}_2\to \mathcal{V}_3\;\nonumber\\ 
      &\quad \qquad =\mathcal{F}_m\{\mathbf{V}(:,m)\}=P_{m}(\Psi):=[\mathbf{P}]_{(\Psi,m)},\; \forall m  \\     
      &\Omega_3(\boldsymbol{\Phi}): \mathcal{V}_1\to \mathcal{V}_3 :=\Omega_2(\Omega_1 (\boldsymbol{\Phi}))\label{Def: Omega3}
  \end{align}
\end{subequations}

Where $\mathcal{V}_1=<\mathbb{R}^{N \times 2},+,.>$ is array configuration matrix vector space, $\mathcal{V}_2=<\mathbb{C}^{N \times M},+,.>$ is beamforming matrix vector space, and $\mathcal{V}_3=<\mathbb{C}^{[-1,1]},+,.>$ is beampattern \cite{mmflexible} vector space with respect to standard vector addition and multiplication. $\Omega_{2}:=\mathcal{F}_{\star}$ is the column-wise, per subcarrier $m$, $\mathcal{F}_{m}$ operation defined in the equation \eqref{Def: F_m}, $\Omega_1$ is the \gls{TTD} array generating function and $\Omega_3$ is the beampattern generation function from array configuration matrix. $\mathcal{T}:=\Omega_{1}(\mathcal{V}_1)\subset \mathcal{V}_2 $ is the set of \gls{TTD} beamformers i.e. image of the $\mathcal{V}_1$ over the operator $\Omega_1$. $\mathcal{P}_{\mathcal{T}}:=\Omega_{2}(\mathcal{T})$ is the set of \gls{TTD} beampatterns generated by \gls{TTD} Arrays $\mathcal{T}$. Observe that $\mathcal{T}$ and $\mathcal{P}_{\mathcal{T}}$ are non-convex from the properties of the $\Omega_1$ and the linearity of the $\Omega_2$ operators. The sets and mappings are illustrated in the figure \eqref{fig:Mapping_Func}.

\section{Problem Definition and State-of-the-art} \label{Section: Problem Definition}

 The goal of the paper is to find the array configuration matrix $\boldsymbol{\Phi}=\begin{bmatrix}
     \mathbf{t} & \boldsymbol{\phi}
 \end{bmatrix}\in \mathbb{C}^{N \times 2}$ to generate the beampattern $\mathbf{P_{\boldsymbol{\Phi}}}$ that minimizes the following multi-objective optimization problem \cite{convexopt_boyd}: 

 \begin{equation}\label{eq: obj_fun_1}
\min_{\boldsymbol{\Phi}\in \mathbb{R}^{N \times 2}} \oint \|[\mathbf{P_{\Phi}}]_{(\Psi,m)}-[\mathbf{P}_{\textit{target}}]_{(\Psi,m)}\|^2 d\Psi, \; \forall m
\end{equation}

 where $\mathbf{P}_{\textit{target}}$ is the desired beampattern. This problem is non-convex and intractable as the feasible set of \gls{TTD} beampatterns $\mathcal{P}_{\mathcal{T}}$ is non-convex. Observe that the problem \eqref{eq: obj_fun_1} can be optimally solved for digital arrays (for relaxed problem) as digital arrays can design beamforming vectors of subcarriers independently. We continue with sum-objective based near-optimal solution strategies\cite{mmflexible,jpta}:

 \subsection{Sum-objective optimization over TTD Beampatterns}
 

 The first sum-objective based solution strategy for the problem \eqref{eq: obj_fun_1} optimizes the following optimization problem over vector space $\mathcal{V}_3$ \cite{mmflexible}: 

 \begin{equation}\label{eq: obj_fun_2}
\min_{\boldsymbol{\Phi}\in \mathbb{R}^{N \times 2}} \sum_{m=1}^{M} \oint \|[\mathbf{P_{\Phi}}]_{(\Psi,m)}-[\mathbf{P}_{\textit{target}}]_{(\Psi,m)}\|^2 d\Psi
\end{equation}

 Although the multi-objective problem is reduced to a single objective problem, the problem \eqref{eq: obj_fun_2} is still non-convex as the feasible set remains the same, and obtaining an optimal solution is still challenging. An exhaustive-search-based solution for the optimization problem (\ref{eq: obj_fun_2}) is proposed over the relaxed feasible set, and a low-complexity approximate solution is proposed for the approximate $\Omega_{2}$ \cite{mmflexible}.  However, the proposed algorithms ignore the beam-squint effect \cite{spatial_wideband} i.e. assumes $\mathcal{F}_m\{.\}$, defined in equation \eqref{Def: F_m} is $\mathcal{F}_m\{.\}\approx\mathcal{F}\{.\}$.
 
 
  \subsection{Weighted sum-objective optimization over TTD Beamformers}

  Alternatively, \cite{jpta} solves the following optimization problem over the vector space $\mathcal{V}_2$:

 \begin{equation}
    \min_{\boldsymbol{\Phi}\in \mathbb{R}^{N \times 2}} \sum_{m=1}^{M}\sum_{n=1}^{N} |[\mathbf{V_{\Phi}}]_{(n,m)}-[\mathbf{V}_{\textit{target}}]_{(n,m)}|^2 \label{eq: obj_fun_3}
 \end{equation}

The problem \eqref{eq: obj_fun_3} is equivalent to the weighted sum-objective solution of the \eqref{eq: obj_fun_1} by generalized Parseval's equality \cite{vetterli}, where weights are introduced due to the changing signal period per subcarrier $m$ from scaled \gls{DTFT} definition in equation \eqref{Def: F_m}. The problem \eqref{eq: obj_fun_3} is non-convex and NP-hard \cite{jpta}; therefore, alternating minimization-based solution methodology is proposed in \cite{jpta} by either employing exhaustive search or by linearizing the \gls{TTD} generation function $\Omega_{2}$.


 
The aforementioned sum-optimization-based solution methods either require exhaustive search or involve some way of convex relaxation on feasible sets $\mathcal{T}$ or $\mathcal{P}_{\mathcal{T}}$. We want to avoid this route; Instead, we show that the set of the vector space $\mathcal{V}_3$ forms a group structure with an appropriate addition operation, and the beampattern synthesis operation in \eqref{Def: Omega3} forms a homomorphism. Hence, we can approximate and generate complicated beampatterns from known and simpler beampatterns to simplify the beampattern synthesis process, following the general wisdom of signal processing. We continue with a rigorous understanding of the beampattern generation process.


\section{Homomorphism over TTD Beampatterns} \label{Section: Beampattern Groups and Homomorphism}

We start by showing the function $\Omega_3$ defines a homomorphism \cite{abstract_algebra_fraleigh} between $<\mathcal{V}_1,+>$ and $<\mathcal{P}_{\mathcal{T}}\subset \mathcal{P},\star>$ where $\star$ is the scaled column-wise circular convolution operation defined as:

\begin{subequations}
\begin{align}
& \text{Let }\mathbf{P^{'},P^{''}}\in\mathcal{P},\nonumber\\
&\star: \mathcal{P} \times \mathcal{P} \to \mathcal{P}
:= \nonumber \\
&\qquad \quad [\mathbf{P^{'}} \star \mathbf{P^{''}}]_{(:,m)}=\sqrt{N}\left([\mathbf{P^{'}}]_{(:,m)} \circledast [\mathbf{P^{''}}]_{(:,m)}\right), \forall \; m \label{Def: star_addition}
\end{align}
\end{subequations}

where the operation $\circledast$ is the circular convolution. 

\begin{theorem}
   $\Omega_3$ is a homomorphism \cite{abstract_algebra_fraleigh} between $<\mathcal{V}_1,+>$ and $<\mathcal{P}_{\mathcal{T}},\star>$ :
   \begin{subequations}
    \begin{align}
        &\forall \; \boldsymbol{\Phi}_{1},\boldsymbol{\Phi}_{2} \in \mathcal{V}_1 = \mathbb{R}^{N \times 2} \nonumber\\
    &\quad \Omega_3\left(\boldsymbol{\Phi}_{1}+\boldsymbol{\Phi}_{2}\right)=\Omega_3\left(\boldsymbol{\Phi}_{1}\right)\star\Omega_3\left(\boldsymbol{\Phi}_{2}\right), \; \nonumber 
    \end{align}
    \end{subequations} 
    \label{theorem: homomorphism}
\end{theorem}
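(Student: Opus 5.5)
Fix a subcarrier $m$ and work column by column, since both $\Omega_3$ and $\star$ act separately on each column $m$; it is enough to show $[\Omega_3(\boldsymbol{\Phi}_1+\boldsymbol{\Phi}_2)]_{(:,m)}=\sqrt{N}\,[\Omega_3(\boldsymbol{\Phi}_1)]_{(:,m)}\circledast[\Omega_3(\boldsymbol{\Phi}_2)]_{(:,m)}$ for every $m$. The plan has three steps: turn addition in $\mathcal{V}_1$ into a product in $\mathcal{V}_2$ through $\Omega_1$, then turn that product into a convolution in $\mathcal{V}_3$ through the DTFT $\mathcal{F}_m$, and finally check that the normalisation constant is exactly the $\sqrt{N}$ in \eqref{Def: star_addition}.

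\emph{Step 1 (exponential turns sums into products).} Write $\boldsymbol{\Phi}_i=[\mathbf{t}_i\;\boldsymbol{\phi}_i]$. From \eqref{Def: Omega1}, the phase $-2\pi f_m t_n+\phi_n$ is linear in $(t_n,\phi_n)$. Hence for every entry $(n,m)$,
\begin{equation*}
[\Omega_1(\boldsymbol{\Phi}_1+\boldsymbol{\Phi}_2)]_{(n,m)}=\sqrt{N}\,[\Omega_1(\boldsymbol{\Phi}_1)]_{(n,m)}[\Omega_1(\boldsymbol{\Phi}_2)]_{(n,m)}.
\end{equation*}
The factor $\sqrt{N}$ appears because each factor carries its own $1/\sqrt{N}$. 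So $\Omega_1$ sends $+$ to a scaled Hadamard product, $\mathbf{V}_{\boldsymbol{\Phi}_1+\boldsymbol{\Phi}_2}=\sqrt{N}\,\mathbf{V}_{\boldsymbol{\Phi}_1}\odot\mathbf{V}_{\boldsymbol{\Phi}_2}$.

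\emph{Step 2 (multiplication theorem of the DTFT).} Apply $\mathcal{F}_m$ to column $m$. The DTFT maps a pointwise product of two finite sequences over $n=0,\dots,N-1$ to the $2\pi$-periodic circular convolution of their transforms:
\begin{equation*}
\mathcal{F}\{x_n y_n\}(\Omega)=\frac{1}{2\pi}\int_{-\pi}^{\pi}X(\Omega')\,Y(\Omega-\Omega')\,d\Omega'.
\end{equation*}
Evaluating this at $\Omega_m=\pi\Psi f_m/(2f_c)$ gives $P_m$ for the summed configuration as $\sqrt{N}$ times the circular convolution of $P^{(1)}_m$ and $P^{(2)}_m$. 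The periodicity that justifies calling the convolution circular comes directly from the DTFT being $2\pi$-periodic in $\Omega_m$. This matches $\Omega_3(\boldsymbol{\Phi}_1)\star\Omega_3(\boldsymbol{\Phi}_2)$ column by column, and $\Omega_3=\Omega_2\circ\Omega_1$ then gives the claim.

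\emph{Main obstacle.} The delicate step is making $\circledast$ precise in the variable $\Psi\in[-1,1]$. The natural period is $2\pi$ in $\Omega_m$, which is a period of $4f_c/f_m$ in $\Psi$, and this depends on $m$ because of beam squint. I would therefore define $\circledast$ for column $m$ as convolution over one period in $\Omega_m$, including the $1/2\pi$ measure normalisation, so that it is exactly the operation the DTFT multiplication theorem produces. I would then state explicitly that the constant in \eqref{Def: star_addition} is $\sqrt{N}$ under this convention. The image $\mathcal{P}_{\mathcal{T}}$ is closed under $\star$ because $\boldsymbol{\Phi}_1+\boldsymbol{\Phi}_2\in\mathcal{V}_1$, so the homomorphism statement holds onto $\langle\mathcal{P}_{\mathcal{T}},\star\rangle$.
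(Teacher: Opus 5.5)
Your proposal is correct and follows the same route as the paper's proof: the exponential turns the sum of configurations into a $\sqrt{N}$-scaled entrywise product, and the DTFT multiplication (convolution) theorem turns that product into the column-wise circular convolution that defines $\star$. Your explicit handling of the $1/2\pi$ normalisation and of the $m$-dependent period in $\Omega_m$, where you take $\circledast$ over a full period rather than only over $\Psi\in[-1,1]$, is a useful precision that the paper's proof leaves implicit.
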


\begin{proof}
    \begin{subequations}
    \begin{align}
    &\text{Let, }\; \boldsymbol{\Phi_1}=\begin{bmatrix}
     \mathbf{t_1} & \boldsymbol{\phi_1}
 \end{bmatrix},\boldsymbol{\Phi_2}=\begin{bmatrix}
     \mathbf{t_2} & \boldsymbol{\phi_2}
 \end{bmatrix} \in \mathcal{V}_1 = \mathbb{R}^{N \times 2}, \\ &[\Omega_3\left(\boldsymbol{\Phi_1}+\boldsymbol{\Phi_2}\right)]_{(:,m)}  =\mathcal{F}_m\{\dfrac{1}{\sqrt{N}}e^{j(-2\pi f_m  (\mathbf{t_{1}+t_{2}})+(\boldsymbol{\phi_{1}+\phi_{2}} )}\}, \nonumber\\
 &\qquad \qquad \qquad \qquad \qquad \qquad \qquad \qquad \qquad \qquad \forall m \label{proof: homomorphism - 1}\\
    &=\sqrt{N}\mathcal{F}_m\{\dfrac{1}{\sqrt{N}}e^{-j(2\pi f_m \mathbf{t_{1}+\phi_{1}})}\dfrac{1}{\sqrt{N}}e^{-j(2\pi f_m \mathbf{t_{2}}+\boldsymbol{\phi_{2}})}\}\label{proof: homomorphism - 2}\\
    &=\sqrt{N}(\mathcal{F}_m\{\dfrac{1}{\sqrt{N}}e^{-j(2\pi f_m \mathbf{t_{1}}+\boldsymbol{\phi_{1}})}\}\nonumber\\
    &\qquad \qquad \qquad \qquad \qquad  \circledast \mathcal{F}_m\{\dfrac{1}{\sqrt{N}}e^{-j(2\pi f_m \mathbf{t_{2}}+\boldsymbol{\phi_{2}})}\} )\label{proof: homomorphism - 3} \\ 
    &=\sqrt{N}\left([\Omega_3\left(\boldsymbol{\Phi_1}\right)]_{(:,m)} \circledast [\Omega_3\left(\boldsymbol{\Phi_2}\right)]_{(:,m)}\right)\label{proof: homomorphism - 4}\\
    &\iff \Omega_3\left(\boldsymbol{\Phi_1}+\boldsymbol{\Phi_2}\right)=\Omega_3\left(\boldsymbol{\Phi_1}\right) \star \Omega_3\left(\boldsymbol{\Phi_2}\right) \label{proof: homomorphism - 5}
    \end{align}
    \end{subequations} 
\end{proof}

Where (\ref{proof: homomorphism - 1},\ref{proof: homomorphism - 4}) is from the definition (\ref{Def: Omega3}), (\ref{proof: homomorphism - 2}) is from properties of complex numbers, (\ref{proof: homomorphism - 3}) is from convolution property of discrete Fourier transform and (\ref{proof: homomorphism - 5}) is from the definition of $\star$ from (\ref{Def: star_addition}). 

\begin{corollary}
    \begin{equation}
        \begin{aligned}
                  &\forall \boldsymbol{\Phi}_{1},\boldsymbol{\Phi}_{2} \in \mathcal{V}_1 = \mathbb{R}^{N \times 2},\; \nonumber \mathbf{P}_{\boldsymbol{\Phi}_{1}+\boldsymbol{\Phi}_{2}}=\mathbf{P}_{\boldsymbol{\Phi}_{1}} \star \mathbf{P}_{\boldsymbol{\Phi}_{2}}   
        \end{aligned}
    \end{equation}
    \label{corollary: homomorphism}
\end{corollary}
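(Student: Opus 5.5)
This is a restatement of Theorem~\ref{theorem: homomorphism} in the notation $\mathbf{P}_{\boldsymbol{\Phi}}:=\Omega_3(\boldsymbol{\Phi})$. The plan is therefore to unfold notation and invoke the theorem, rather than to redo the computation.

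\textbf{Step 1 (notation).} By the problem formulation in \eqref{eq: obj_fun_1}, $\mathbf{P}_{\boldsymbol{\Phi}}$ is the beampattern generated by the array configuration $\boldsymbol{\Phi}$. That is,
\[
\mathbf{P}_{\boldsymbol{\Phi}}=\Omega_2(\Omega_1(\boldsymbol{\Phi}))=\Omega_3(\boldsymbol{\Phi}),
\]
using the definition \eqref{Def: Omega3}. Since $\mathcal{V}_1$ is closed under addition, $\boldsymbol{\Phi}_1+\boldsymbol{\Phi}_2\in\mathcal{V}_1$, so $\mathbf{P}_{\boldsymbol{\Phi}_1+\boldsymbol{\Phi}_2}$ is well defined and lies in $\mathcal{P}_{\mathcal{T}}$.

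\textbf{Step 2 (substitution).} Substituting gives
\[
\mathbf{P}_{\boldsymbol{\Phi}_1+\boldsymbol{\Phi}_2}=\Omega_3(\boldsymbol{\Phi}_1+\boldsymbol{\Phi}_2).
\]
Theorem~\ref{theorem: homomorphism} then gives
\[
\Omega_3(\boldsymbol{\Phi}_1)\star\Omega_3(\boldsymbol{\Phi}_2)=\mathbf{P}_{\boldsymbol{\Phi}_1}\star\mathbf{P}_{\boldsymbol{\Phi}_2}.
\]

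\textbf{Step 3 (optional direct check).} To make the corollary self-contained, I would also verify it column by column.

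For fixed $m$, the $n$-th entry of $\Omega_1(\boldsymbol{\Phi}_1+\boldsymbol{\Phi}_2)$ factors as
\[
\frac{1}{\sqrt{N}}e^{j(-2\pi f_m t_{1,n}+\phi_{1,n})}\,e^{j(-2\pi f_m t_{2,n}+\phi_{2,n})}.
\]
This equals $\sqrt{N}$ times the product of the two normalized sequences.

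Taking $\mathcal{F}_m$ and applying the DTFT modulation (multiplication) property turns this pointwise product into a periodic convolution in $\Omega_m$. Including the $\sqrt{N}$ factor, this is exactly the $\star$ operation in \eqref{Def: star_addition}.

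\textbf{Main subtlety.} The only step needing care is this DTFT modulation step. The product of two DTFTs' time sequences corresponds to $\frac{1}{2\pi}\int_{-\pi}^{\pi}$ circular convolution over one period of $\Omega_m$. The normalization of $\circledast$ must absorb this $\frac{1}{2\pi}$ factor and the $\Psi$-to-$\Omega_m$ rescaling. I would state explicitly that $\circledast$ is understood with that normalization, consistent with its use in the proof of Theorem~\ref{theorem: homomorphism}.
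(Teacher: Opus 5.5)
Your proposal is correct and matches the paper's proof, which is one line: identify $\mathbf{P}_{\boldsymbol{\Phi}}$ with $\Omega_3(\boldsymbol{\Phi})$ via the definition \eqref{Def: Omega3} and invoke Theorem~\ref{theorem: homomorphism}, as in your Steps 1--2. Your optional Step 3 and your remark on how $\circledast$ must absorb the $\frac{1}{2\pi}$ factor and the $\Psi$-to-$\Omega_m$ rescaling concern the theorem rather than the corollary, and are a fair caveat that the paper leaves implicit.
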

\begin{proof}
    Proof follows from Theorem \eqref{theorem: homomorphism} and $\Omega_3$ definition in \eqref{Def: Omega3}.
\end{proof}

Convolution and its algebraic properties have been previously discussed under generalized linearity for homomorphic filtering in \cite{schafer_echo}, where homomorphic transformation is intentionally introduced for domain transformation (followed by the inverse transformation). For \gls{TTD} arrays, however, we observed and showed that homomorphism is the inherent property of the beampattern synthesis process.

$<\mathcal{P}_{\mathcal{T}},\star>$ defines a group structure with respect to $\star$ operator\cite{schafer_echo}. However, not every beampattern $\mathbf{P}\in \mathcal{P}$ has an inverse with respect to the $\star$ operator; hence, we cannot extend the group structure to the $<\mathcal{P},\star>$. To ensure invertibility and group structure, we restrict the set of beampatterns $\mathcal{P}$ to $\mathcal{P}_{nz}$ where the patterns are generated by the beamforming vectors with non-zero gain for every antenna $n\in 1:N$ and for every subcarrier $m \in 1:M$. This is true for most of our desired applications, including \gls{TTD} analog arrays, and we denote $\mathcal{P}_{nz}=\mathcal{P}$ for the rest of the paper for conciseness. We continue with demonstrating how a \gls{TTD} beampattern can be synthesized by utilizing the observed homomorphism. 

\subsection{Homomorphic TTD Beampattern Synthesis} \label{homom_section}

Let us consider a \gls{TTD} beampattern $\mathbf{P_{\Phi'}} \in \mathcal{P}_{\mathcal{T}}$ and assume that it can be represented with $G$ \gls{TTD} beampatterns over $\star$ operation such as:

\begin{equation}
\begin{aligned}
    \mathbf{P_{\boldsymbol{\Phi}'}}=\mathbf{P_{\boldsymbol{\Phi_1}}} \star \dots \star \mathbf{P_{\boldsymbol{\Phi_{G}}}} \label{eqn: star representation}
\end{aligned}
\end{equation}

Then, the Corollary \eqref{corollary: homomorphism} implies that we can write the corresponding array configuration matrices as:

\begin{equation}
\begin{aligned}
    \boldsymbol{\Phi}'= \sum_{g=1}^{G} \boldsymbol{\Phi_g} \label{eqn: config matrix represnt.}
\end{aligned}
\end{equation}

and can directly find time delay $\mathbf{t'}$ and phase shifter values $\boldsymbol{\phi'}$ as:
\begin{equation}
\begin{aligned}
    \mathbf{t'}=\sum_{l=1}^{G}\mathbf{t_g},\quad \boldsymbol{\phi'}=\sum_{l=1}^{G}\boldsymbol{\phi_g} \label{eqn: time and phase representation}
\end{aligned}
\end{equation}

Therefore, finding a representation of a \gls{TTD} beampattern over $\star$ operation is sufficient to determine its array configuration matrix. Although a representation of a beampattern $\mathbf{P}\in \mathcal{P}$ that is not a \gls{TTD} beampattern, i.e., $\mathbf{P}\notin \mathcal{P}_{\mathcal{T}}$ can also be found from the group structure of $<\mathcal{P},\star>$; utilization of the homomorphism and equation \eqref{eqn: time and phase representation} requires the approximation of the $\mathbf{P}$ or its representation by \gls{TTD} beampatterns in $\mathcal{P}_{\mathcal{T}}$.

\subsection{Beampattern Approximation: Divide-and-Conquer} \label{Section: Divide and conquer}

We denote the beampatterns that represent $\mathbf{P}$ as \textit{generator} beampatterns. Representing a beampattern $\mathbf{P}\in<\mathcal{P},\star>$ over $\star$ operation resembles the basis representation of a vector. Hence, instead of working on a hard-to-approximate beampattern, our intuitive approach is to work on corresponding easy-to-approximate \textit{generator} beampatterns and utilize the equations \eqref{eqn: config matrix represnt.},\eqref{eqn: time and phase representation} to synthesize the approximated \gls{TTD} beampattern, in a divide-and-conquer manner. Unfortunately, representing an arbitrary beampattern in $<\mathcal{P},\star>$ and determining \textit{generator} beampatterns is not a trivial task and requires more advanced mathematical tools beyond the utilized group structure. We leave this task to future research and instead focus on beampatterns that can be represented by easy-to-approximate \textit{generator} beampatterns such as split beampatterns.


\section{Split Beampattern Approximation} \label{Section: Split Beampattern Approximation}

\begin{figure}[!t]
\includegraphics[width=0.5\textwidth]{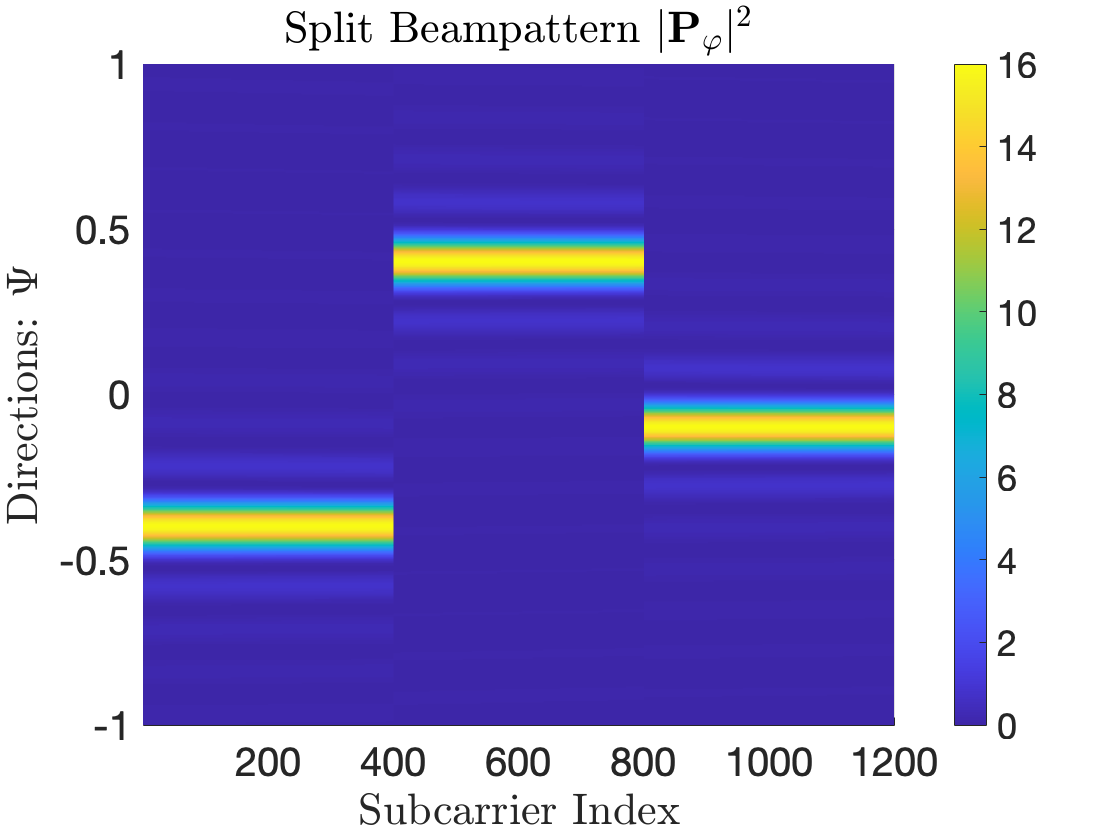}
\caption{Gain of a split beampattern with $G=3$ subbands where each subband is assigned to a direction with the subband to direction mapping vector: $\boldsymbol{\varphi}=\begin{bmatrix}
    -0.4 & 0.4 & -0.1
\end{bmatrix}^{H}$} .
\label{Fig: split beampattern example}
\end{figure}

We continue with the definition of split beampatterns and show that split beampatterns can be represented with easy-to-approximate \textit{generator} beampatterns. 

\subsection{Split Beampattern Definition}

A split beampattern with $G$ subbands $\mathbf{P}(\boldsymbol{\varphi}) \in \mathcal{P} \subset \mathcal{P}_{\mathcal{V}_2}$, assigns directions to subbands according to subband to direction mapping vector $\boldsymbol{\varphi}\in[1,-1]^{G\times1}$. Therefore, $\mathbf{P}(\boldsymbol{\varphi}) \in \mathcal{P} \subset \mathcal{P}_{\mathcal{V}_2}$ is defined to be generated by the following far-field beamforming matrix $\mathbf{V}(\boldsymbol{\varphi}) \in \mathcal{V}_2$ \cite{jpta}:

\begin{equation}
\begin{aligned}
    & [\mathbf{V}(\boldsymbol{\varphi})]_{(:,m)}=\dfrac{1}{\sqrt{N}}e^{j\pi n [\psi]_m \frac{f_m}{f_c} } \in \mathbb{C}^{N \times 1},\;m\in1:M\\
     &\boldsymbol{\psi}=\begin{bmatrix}
        \smash{\underbrace{\begin{matrix}
            \varphi_1 &\dots & \varphi_1 
        \end{matrix}}_{M'}} &  \dots & \smash{\underbrace{\begin{matrix}
            \varphi_B &\dots & \varphi_B 
        \end{matrix}}_{M'}} 
    \end{bmatrix}^{H} 
\end{aligned}
\end{equation}\\

where the subcarriers are
partitioned into $G$ subbands each containing $M'=\frac{M}{G}$ subcarriers and subcarrier $m$ is directed towards the direction $[\boldsymbol{\psi}]_m$. Then, the corresponding beampattern $\mathbf{P}(\boldsymbol{\varphi}) \in \mathcal{P}_{\mathcal{V}_2}$ is obtained as follows:

\begin{subequations}

\begin{align}
       &[\mathbf{P}(\boldsymbol{\varphi})]_{(:,m)}=[\Omega_2(\mathbf{V}(\boldsymbol{\varphi}))]_{(:,m)}, \nonumber\\
       &\qquad \qquad=\dfrac{1}{\sqrt{N}}\Xi_{N,m}(\Psi-\psi_m),\; \forall m\nonumber\\
    \label{Def: dirichlet beampattern}
\end{align}

\end{subequations}

where $\Xi_{N,m}$ is the Dirichlet sinc function for $N$ antennas and $f_m$ \cite{delay_phase} and a split beampattern example is given in figure \eqref{Fig: split beampattern example}. The following Lemma (\ref{Lemma: addition of angles}) shows an important property of the split beampatterns, i.e., when the $\star$ operation is applied to two split beampatterns, their corresponding direction mapping vectors are added:

\begin{lemma} 
    Let $\mathbf{P}(\boldsymbol{\varphi}_1),\mathbf{P}(\boldsymbol{\varphi}_2) \in \mathcal{P}_{\mathcal{V}_2}$, then $$\mathbf{P}(\boldsymbol{\varphi}_1+\boldsymbol{\varphi}_2)=\mathbf{P}(\boldsymbol{\varphi}_1) \star \mathbf{P}(\boldsymbol{\varphi}_2)$$ 
    \label{Lemma: addition of angles}
\end{lemma}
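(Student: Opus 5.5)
The plan is to mimic the proof of Theorem \ref{theorem: homomorphism}: work column by column, i.e. fix a subcarrier $m$, and reduce the $\star$ operation back to an entrywise product in the antenna domain through the convolution property of $\mathcal{F}_m$. Everything happens in $\mathcal{V}_2$ first and is then mapped by $\Omega_2$, so no explicit Dirichlet-kernel manipulation is needed.

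\textbf{Step 1: the mapping vectors add subcarrier by subcarrier.} Let $\boldsymbol{\psi}_1,\boldsymbol{\psi}_2$ be the subcarrier-level direction vectors obtained by repeating the entries of $\boldsymbol{\varphi}_1,\boldsymbol{\varphi}_2$ $M'$ times each. The expansion $\boldsymbol{\varphi}\mapsto\boldsymbol{\psi}$ is a fixed linear map (a Kronecker product with $\mathbf{1}_{M'}$), so the vector associated with $\boldsymbol{\varphi}_1+\boldsymbol{\varphi}_2$ is $\boldsymbol{\psi}_1+\boldsymbol{\psi}_2$. In particular $[\boldsymbol{\psi}]_m=[\boldsymbol{\psi}_1]_m+[\boldsymbol{\psi}_2]_m$ for every $m$.

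\textbf{Step 2: the identity in the antenna domain.} For each $m$ and each antenna $n$, the exponent of the beamformer is linear in the direction:
\[
\sqrt{N}\,[\mathbf{V}(\boldsymbol{\varphi}_1)]_{(n,m)}[\mathbf{V}(\boldsymbol{\varphi}_2)]_{(n,m)}=\tfrac{1}{\sqrt{N}}e^{j\pi n([\boldsymbol{\psi}_1]_m+[\boldsymbol{\psi}_2]_m)f_m/f_c}=[\mathbf{V}(\boldsymbol{\varphi}_1+\boldsymbol{\varphi}_2)]_{(n,m)} .
\]
Hence $\mathbf{V}(\boldsymbol{\varphi}_1+\boldsymbol{\varphi}_2)=\sqrt{N}\,\mathbf{V}(\boldsymbol{\varphi}_1)\odot\mathbf{V}(\boldsymbol{\varphi}_2)$. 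This is the exact analogue of step (\ref{proof: homomorphism - 2}).

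\textbf{Step 3: apply $\Omega_2$.} Apply $\Omega_2=\mathcal{F}_\star$ column-wise, using the same convolution (modulation) property of the DTFT invoked in (\ref{proof: homomorphism - 3}): a product in $n$ becomes a circular convolution in the $2\pi$-periodic variable $\Omega_m$, up to the normalization absorbed in the factor $\sqrt{N}$ as in Theorem \ref{theorem: homomorphism}. Comparing with definition (\ref{Def: star_addition}) gives $[\mathbf{P}(\boldsymbol{\varphi}_1+\boldsymbol{\varphi}_2)]_{(:,m)}=[\mathbf{P}(\boldsymbol{\varphi}_1)\star\mathbf{P}(\boldsymbol{\varphi}_2)]_{(:,m)}$ for all $m$, which is the claim.

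An alternative route is to note that each $\mathbf{V}(\boldsymbol{\varphi})$ is itself a TTD-type beamformer of the form $\Omega_1$ with per-subcarrier frequency-dependent phase. One could then try to invoke Corollary \ref{corollary: homomorphism} directly. However, split beamformers generally lie outside $\mathcal{T}$ because the direction changes across subbands, so that shortcut fails. The direct computation above avoids it.

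\textbf{Main obstacle.} The only delicate point is consistency of conventions. The split beamformer uses $e^{+j\pi n\psi f_m/f_c}$, while $\mathcal{F}_m$ uses the kernel $e^{-jn\Omega_m}$ with a scaled frequency variable. I must check that the scaling constant and the $\sqrt{N}$ normalization work out exactly as in Theorem \ref{theorem: homomorphism}, so that the shift in $\Psi$ appearing in $\Xi_{N,m}(\Psi-\psi_m)$ is consistent. I will handle this by doing the whole argument at the level of $\mathbf{V}$ and citing the same DTFT property used in Theorem \ref{theorem: homomorphism}, rather than convolving Dirichlet kernels explicitly. The periodicity of the convolution variable is then automatic.
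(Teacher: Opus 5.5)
Your proof is correct, but it takes a different route from the paper's. The paper stays in the beampattern domain: it substitutes the closed form \eqref{Def: dirichlet beampattern} into $\star$ and invokes an unproved property of the Dirichlet kernel, namely that the circular convolution of two shifted kernels is the kernel shifted by the sum.

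You instead pull back to $\mathcal{V}_2$, where $\mathbf{V}(\boldsymbol{\varphi}_1+\boldsymbol{\varphi}_2)=\sqrt{N}\,\mathbf{V}(\boldsymbol{\varphi}_1)\odot\mathbf{V}(\boldsymbol{\varphi}_2)$ is just additivity of exponents. You then push forward with the DTFT modulation theorem exactly as in Theorem~\ref{theorem: homomorphism}. That is the standard proof of the Dirichlet identity the paper cites, so your argument fills in the step the paper leaves implicit. It also shows the lemma is the mechanism of Theorem~\ref{theorem: homomorphism} applied to beamformers outside $\mathcal{T}$; you are right that Corollary~\ref{corollary: homomorphism} cannot simply be quoted.

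Your route also makes the constants automatic: $\sqrt{N}\cdot\frac{1}{\sqrt{N}}\cdot\frac{1}{\sqrt{N}}=\frac{1}{\sqrt{N}}$. The paper's displayed chain, by contrast, drops the two $\frac{1}{\sqrt{N}}$ factors of \eqref{Def: dirichlet beampattern} in its first step and ends with $\sqrt{N}\,\Xi_{N,m}$ where $\frac{1}{\sqrt{N}}\Xi_{N,m}$ is required. Because $\mathbf{P}(\boldsymbol{\varphi})$ is defined as $\Omega_2(\mathbf{V}(\boldsymbol{\varphi}))$, you never need the closed form of the pattern. So the sign and scaling conventions you worry about in your last paragraph only affect where the resulting beam points, not whether the identity holds. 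What the paper's route buys in exchange is a direct display that the product is a single Dirichlet beam at $[\boldsymbol{\psi}_1]_m+[\boldsymbol{\psi}_2]_m$, which is the picture the generator construction relies on.

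One small correction to your Step~3. The $\sqrt{N}$ in $\star$ is entirely used up by the beamformer normalization in Step~2. The constant that must be built into $\circledast$ is instead the $\frac{1}{2\pi}$ of the modulation theorem, with the convolution taken over a full $2\pi$-period of $\Omega_m$. In the $\Psi$ variable that period is generally not the visible region $[-1,1]$. This is the same implicit convention Theorem~\ref{theorem: homomorphism} uses.
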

\begin{proof}
    \begin{subequations}
    \begin{align}
        &\text{Let, }\mathbf{P}(\boldsymbol{\varphi}_1),\mathbf{P}(\boldsymbol{\varphi}_2) \in \mathcal{P}_{\mathcal{V}_2} \nonumber,\\ 
        &[\mathbf{P}(\boldsymbol{\psi}_1) \star \mathbf{P}(\boldsymbol{\varphi}_2)]_{(:,m)}\nonumber\\
        &\quad =\sqrt{N}(\Xi_{N,m}(\Psi-[\boldsymbol{\psi_1}]_m)\circledast\Xi_{N,m}(\Psi-[\boldsymbol{\psi_2}]_m))\label{proof: lemma addition of angles - 1}\\
         &\qquad= \sqrt{N}\Xi_{N,m}(\Psi-([\boldsymbol{\psi_1}]_m+[\boldsymbol{\psi_2}]_m))\label{proof: lemma addition of angles - 2}\\
         & \qquad \qquad \qquad \qquad \qquad \qquad  =[\mathbf{P}(\boldsymbol{\varphi}_1+\boldsymbol{\varphi}_2)]_{(:,m)}, \; \forall m  \label{proof: lemma addition of angles - 3}
    \end{align}
    \end{subequations} 
\end{proof}

where equations \eqref{proof: lemma addition of angles - 1},\eqref{proof: lemma addition of angles - 3} follow from the definition in equation \eqref{Def: dirichlet beampattern}, and equation \eqref{proof: lemma addition of angles - 2} follows from the properties of the Dirichlet sinc function. The Lemma \eqref{Lemma: addition of angles} is the main tool for obtaining generator beampatterns that represent the split beampatterns. We continue with the construction of the generator beampatterns of split beampatterns. Then, we discuss the properties of the generators that make them easy to approximate.

\begin{figure*}[!t]
\centering
\subfloat[]{\includegraphics[width=0.4\textwidth]{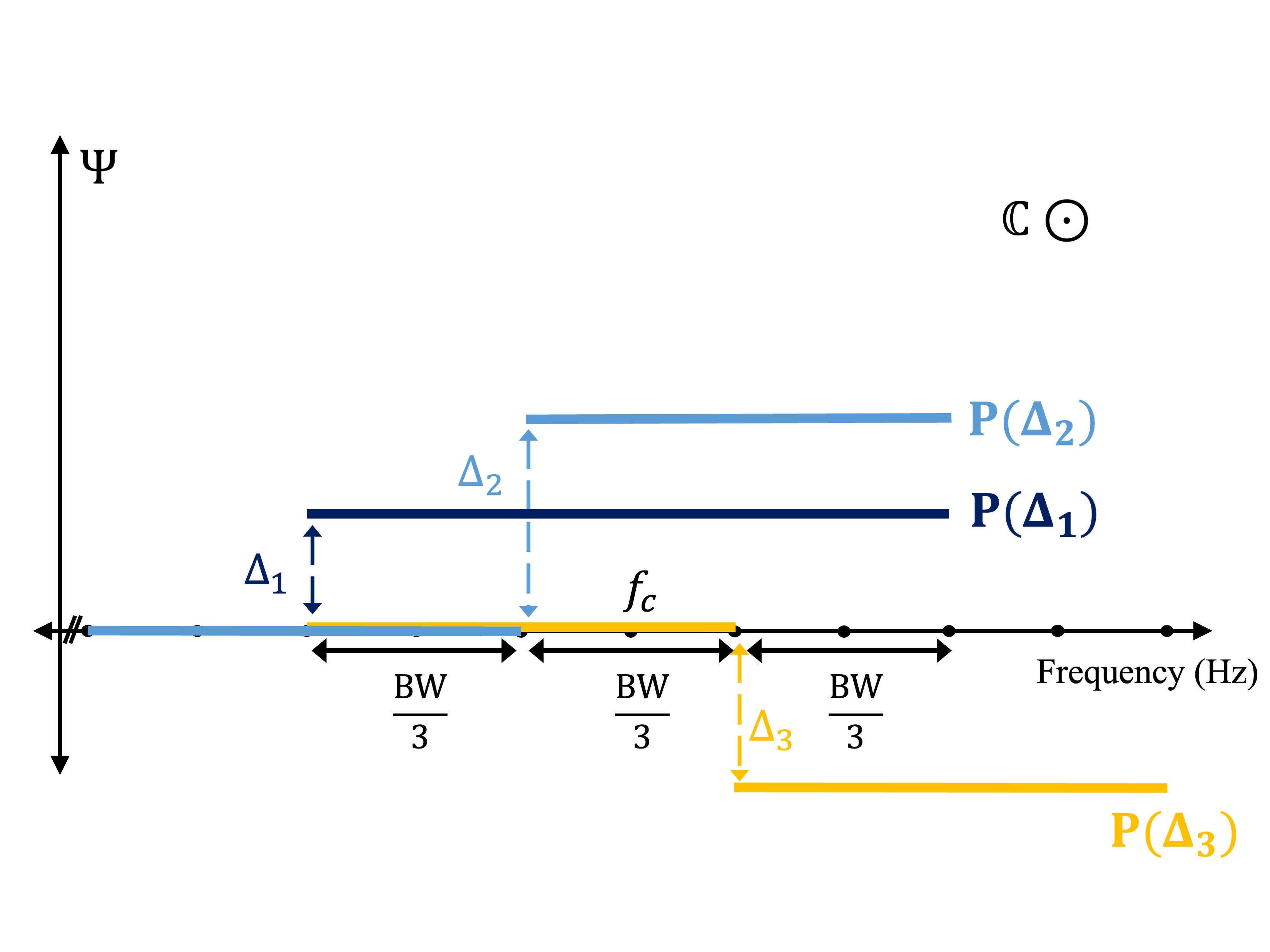}%
\label{Fig: 3 basis vectors}}
\hfil
\subfloat[]{\includegraphics[width=0.4\textwidth]{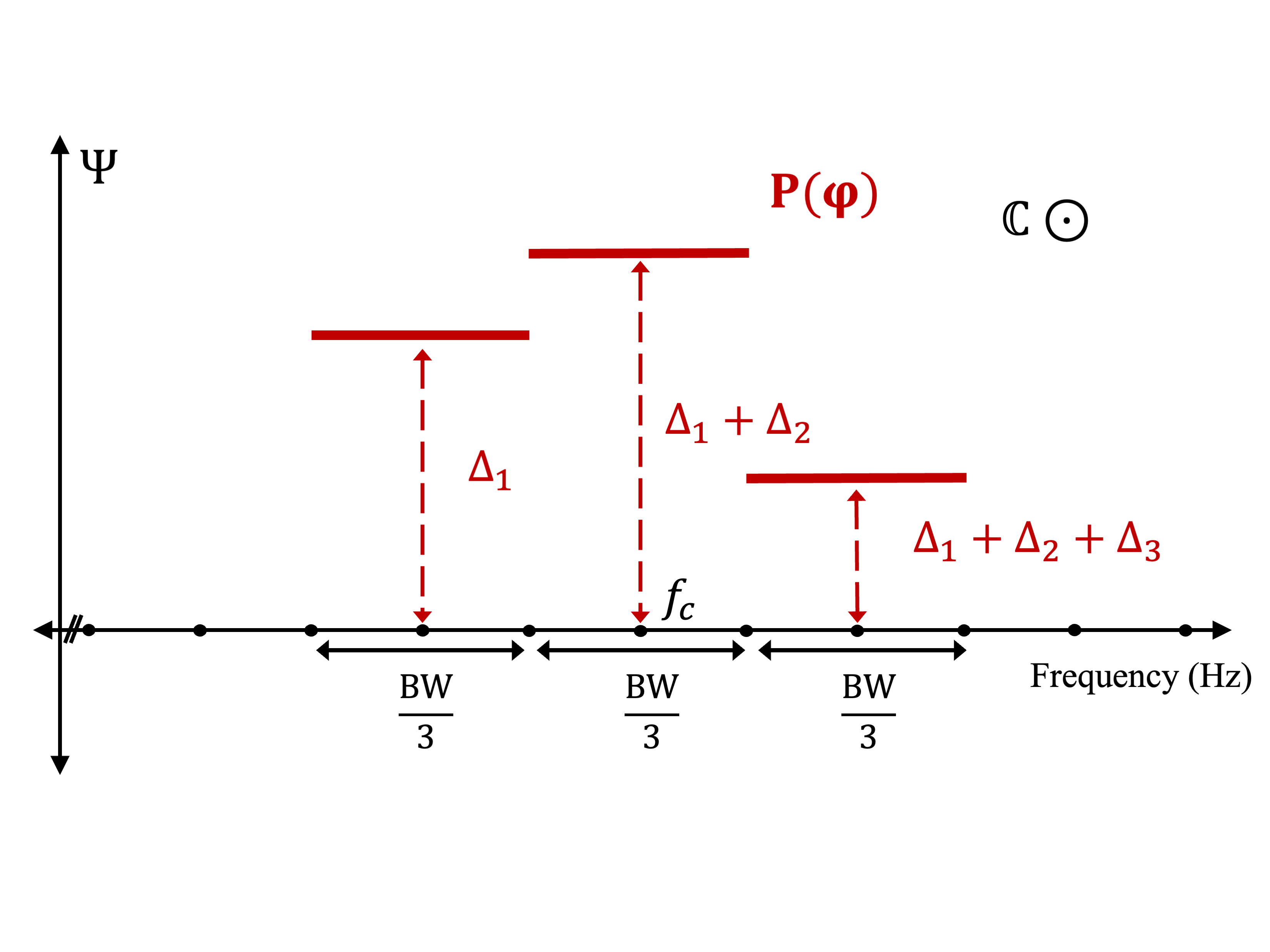}%
\label{Fig: 3 basis vectors result}}
\caption{Ideal split beampattern generators and beampattern generation process. (a) Ideal split beampattern generators created from shifted and scaled $2$ direction split beampatterns (b) Beampattern obtained by taking $\star$ operation of the generator beampatterns.}
\end{figure*}


\subsection{Split Beampattern Generators} \label{Section: Split Beampattern Generators}

We start our demonstration with the split beampatterns with $G=3$ partitions. Consider an arbitrary split beampattern $\mathbf{P}(\boldsymbol{\boldsymbol{\varphi}})$ with subband to direction mapping vector $\boldsymbol{\varphi}=[\varphi_1,\varphi_2,\varphi_3]^{H} \in \mathbb{R}^{3\times 1}$ and beampatterns $\mathbf{P}(\boldsymbol{\Delta_1}),\mathbf{P}(\boldsymbol{\Delta_2}),\mathbf{P}(\boldsymbol{\Delta_3})$ with the following direction mapping vectors: 
\begin{equation}
    \begin{aligned}
        &\boldsymbol{\Delta_1}=[\Delta_1,\Delta_1,\Delta_1]^{H} \in \mathbb{R}^{3\times 1}\\ 
        &\boldsymbol{\Delta_2}=[0,\Delta_2,\Delta_2]^{H}\in \mathbb{R}^{3\times 1}\\       
        &\boldsymbol{\Delta_3}=[0,0,\Delta_3]^{H} \in \mathbb{R}^{3\times 1}
    \end{aligned}
\end{equation}

Let $\mathbf{P}(\boldsymbol{\boldsymbol{\varphi}})=\mathbf{P}(\boldsymbol{\Delta_1})\star\mathbf{P}(\boldsymbol{\Delta_2})\star\mathbf{P}(\boldsymbol{\Delta_3})$ which is illustrated in Figure \ref{Fig: 3 basis vectors}, ignoring the out of bandwidth components which we discuss in section \eqref{Section: Approximating Generators}. Then, from Lemma \eqref{Lemma: addition of angles}:

\begin{equation}
    \mathbf{P}(\boldsymbol{\varphi})=\mathbf{P}(\boldsymbol{\Delta_1})\star\mathbf{P}(\boldsymbol{\Delta_2})\star\mathbf{P}(\boldsymbol{\Delta_3})=\mathbf{P}(\boldsymbol{\Delta_1+\Delta_2+\Delta_3}) \nonumber
\end{equation}

which implies that,

\begin{equation}
    [\varphi_1,\varphi_2,\varphi_3]^{H}=[\Delta_1,\Delta_1+\Delta_2,\Delta_1+\Delta_2+\Delta_3]^{H} 
    \label{Eqn: generator addition}
\end{equation}

Therefore, we can generate an arbitrary split beampattern with $\boldsymbol{\varphi}=[\varphi_1,\varphi_2,\varphi_3]^{H}\in \mathbb{R}^{3\times 1}$ from $3$ generators, $\mathbf{P}(\boldsymbol{\Delta_1}),\mathbf{P}(\boldsymbol{\Delta_2}),\mathbf{P}(\boldsymbol{\Delta_3})$, by solving the following simple linear equation for a given $\boldsymbol{\varphi}$:

\begin{equation}
\label{3_matrix_split_unwrap}
    \begin{bmatrix}
        1 & 0 & 0 \\
        1 & 1 & 0 \\
        1 & 1 & 1 \\
    \end{bmatrix} \begin{bmatrix}
        \Delta_1  \\
        \Delta_2  \\
        \Delta_3  \\
    \end{bmatrix}= \begin{bmatrix}
        \varphi_1  \\
        \varphi_2  \\
        \varphi_3  \\
    \end{bmatrix} 
\end{equation}

This reduces the beampattern design process to a simple linear equation with $3$ unknowns, which can be solved as follows:

\begin{equation}
\label{3_lin_split_unwrap}
\begin{aligned}
    &\Delta_1=\varphi_1\\ 
    &\Delta_2=\varphi_2-\Delta_1\\
    &\Delta_3=\varphi_3-\Delta_2
\end{aligned}
\end{equation}

From construction, generator beampatterns directions need to satisfy $\forall g>2,|\Delta_{g}|<1$ to avoid wrapping. Therefore, if $\exists g>2, |\Delta_{g}|>1$ we revise the generator beampatterns as follows:

\begin{equation}
    \begin{aligned}
    &\boldsymbol{\Delta_g} \leftarrow \boldsymbol{\Delta_g}- \lfloor\frac{\Delta_g}{2}\rfloor \\   
    &\boldsymbol{\Delta_1}  \leftarrow \boldsymbol{\Delta_1}+ \lfloor\frac{\Delta_g}{2}\rfloor 
    \end{aligned}
    \label{Eqn: unwrap_angles}
\end{equation}

which does not change the resulting split beampattern from the equation \eqref{Eqn: generator addition}. The above process can be generalized to $G>3$ partitions by defining the generator beampatterns, $\mathbf{P}(\boldsymbol{\Delta_g}),\;g\in\;[1,G]$ with the following subband-direction mapping vector:

\begin{equation}
    \boldsymbol{\Delta_g}=[\underbrace{0,\dots,0}_{g-1},\underbrace{\Delta_g,\dots,\Delta_g}_{G-g+1}]\;g\in\;[1,G]
\end{equation}

and solving the following linear equation:

\begin{equation}
\label{matrix_split_unwrap}
    \begin{bmatrix}

         1 & 0 & 0 \\
         1 & \ddots & 0 \\
         1 & 1 & 1 \\
    \end{bmatrix} \begin{bmatrix}
        \Delta_1  \\
        \vdots  \\
        \Delta_G  \\
    \end{bmatrix}= \begin{bmatrix}
        \varphi_1  \\
        \vdots \\
        \varphi_G  \\
    \end{bmatrix} 
\end{equation}

which the solution can be found by the following recursion:

\begin{equation}
\begin{aligned}
    &\Delta_1=\varphi_1\\
    &\Delta_g=\varphi_g-\Delta_{g-1};\; g \in [2,G]
\label{Eqn: HDB_recursion}
\end{aligned}
\end{equation}

The corresponding generator beampatterns are revised as in equation \eqref{Eqn: unwrap_angles}.

We have shown that a split beampattern $\mathbf{P}(\varphi)$ can be represented over $\star$ operation with split beampattern generators $\mathbf{P}(\boldsymbol{\Delta_g}),\;g\in\;[1,G]$ as follows:

\begin{equation}
\begin{aligned}
    \mathbf{P(\varphi)}=\mathbf{P}(\boldsymbol{\Delta_g}) \star \dots \star \mathbf{P}(\boldsymbol{\Delta_G}) \label{eqn: split star representation}
\end{aligned}
\end{equation}

where $\Delta_g,\;g\in\;[1,G]$ are obtained from equations \eqref{Eqn: HDB_recursion},\eqref{Eqn: unwrap_angles}. Split beampattern generators for $g>1$, like the desired split beampattern, cannot be realized by \gls{TTD} arrays due to the discontinuity of directions between different subbands \cite{jpta}. However, we have successfully represented the desired split beampattern with split beampattern generators; if split beampattern generators are easy-to-approximate, we can utilize the divide-and-conquer strategy discussed in section \eqref{Section: Divide and conquer}. We continue with the properties of the split beampattern generators and show that such generators can be efficiently approximated with \gls{TTD} arrays.

\begin{figure*}[!t]
\centering
\subfloat[$\mathbf{P}_{\boldsymbol{\Phi_{\boldsymbol{\Delta_{2}}}},\; \boldsymbol{\Delta_{2}}}\approx{[0,0.2,0.2]^H}$]{\includegraphics[width=0.33\textwidth]{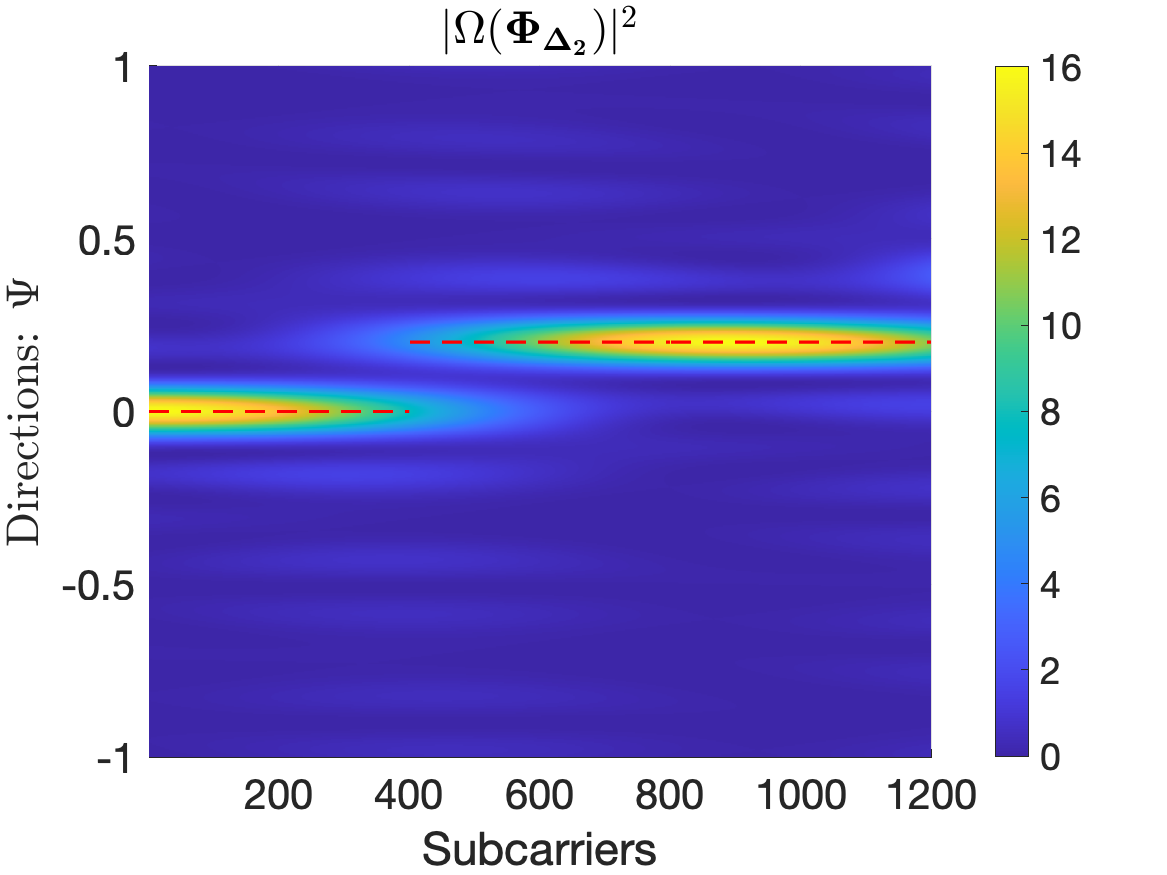}%
\label{HDB_generator_1}}
\hfil
\subfloat[$\mathbf{P}_{\boldsymbol{\Phi_{\boldsymbol{\Delta_{3}}}},\; \boldsymbol{\Delta_{3}}}\approx{[0,0,0.2]^H}$]{\includegraphics[width=0.33\textwidth]{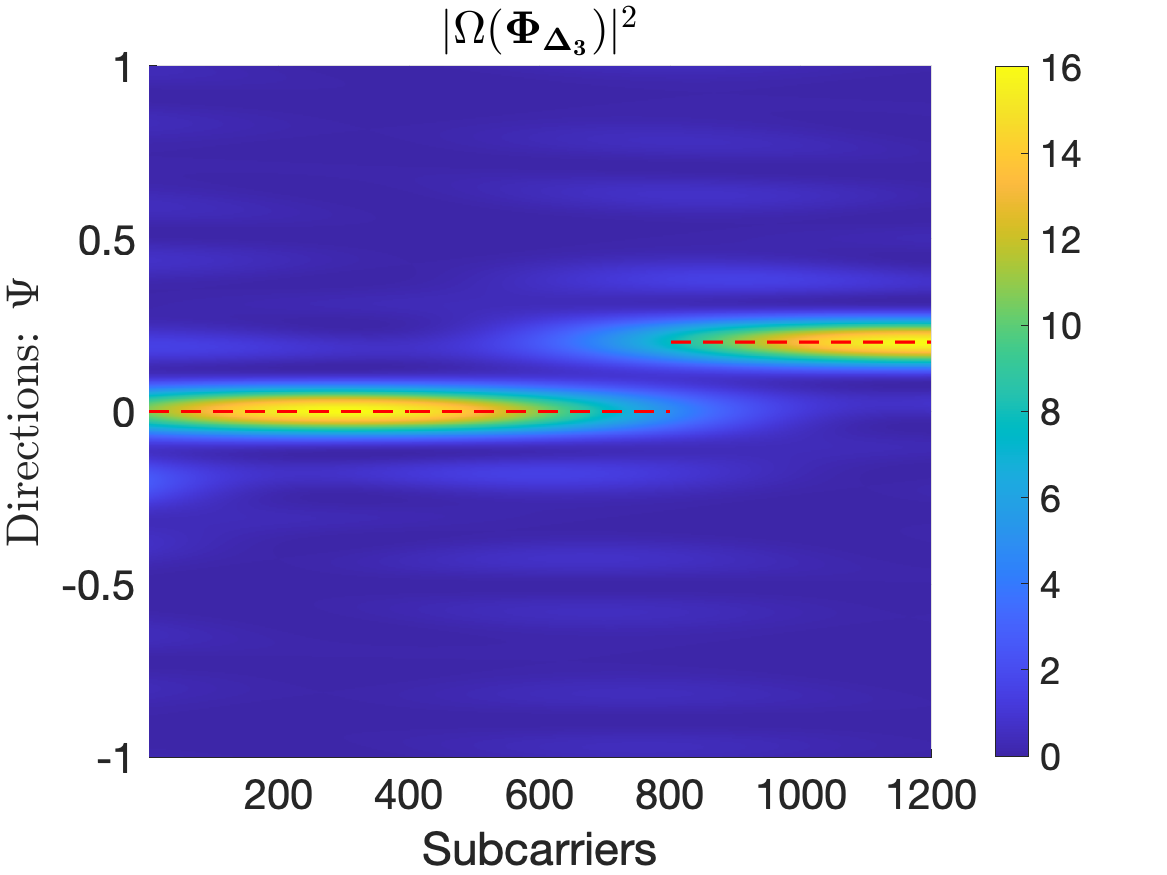}%
\label{HDB_generator_2}}
\hfil
\subfloat[$\mathbf{P}_{\boldsymbol{\Phi_{\boldsymbol{\Delta_{2}}+\boldsymbol{\Delta_{3}}}},\;\boldsymbol{\Delta_{2}}+\boldsymbol{\Delta_{3}}}\approx{[0,0.2,0.4]^H}$]{\includegraphics[width=0.33\textwidth]{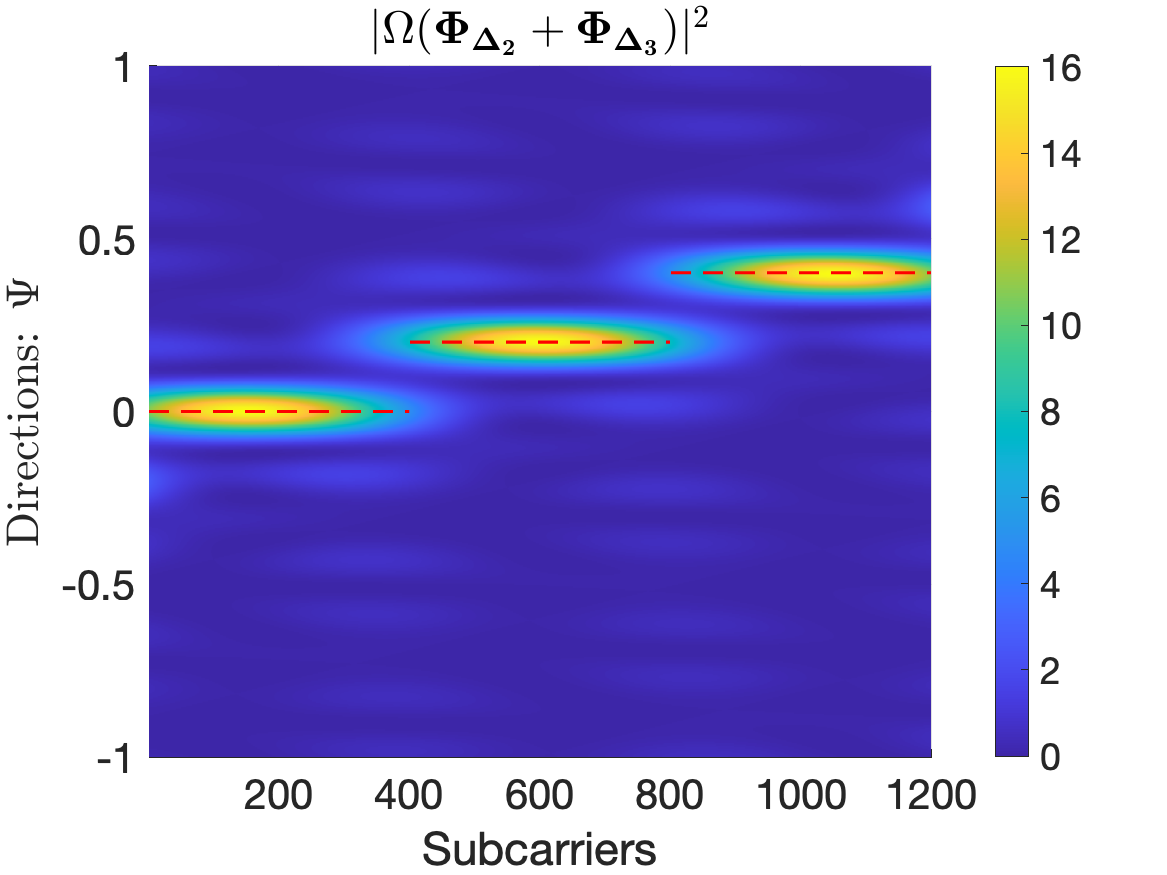}
\label{HDB_aprox}}
\caption{Generator Beampatterns approximated by the mmFlexible math \cite{mmflexible} algorithm (\ref{HDB_generator_1},\ref{HDB_generator_2}) and resulting beampattern after adding corresponding time delay and phase shifts (\ref{HDB_aprox}). Dashed red lines indicate the target directions. }
\end{figure*}

\subsection{Approximating Split Beampattern Generators} \label{Section: Approximating Generators}

Split beampattern generators $\mathbf{P}(\boldsymbol{\Delta_g}),\;g\in\;[1,G]$ can be directly approximated with \gls{TTD} arrays $\mathbf{P}(\boldsymbol{\Delta_g}) \approx \mathbf{P}_{\Phi_{\Delta_g}}(\boldsymbol{\Delta_g}) \in \mathcal{P}_{\mathcal{T}}$ with the optimization based algorithms \cite{jpta,mmflexible} discussed in Section \eqref{Section: Problem Definition}. However, optimization problem-based approaches do not provide any computational advantage as approximating generators have the same computational complexity as approximating the desired split beampattern. Fortunately, we can utilize the structure of the generator beampatterns to simplify the approximation process significantly.

Similar to Section \eqref{Section: Split Beampattern Generators}, we first demonstrate the structure of the split beampattern generators for $G=3$ subbands; which are illustrated in Figure \eqref{Fig: 3 basis vectors}. We can observe that $\mathbf{P}(\boldsymbol{\Delta_1})=\mathbf{P_{\Phi_{\Delta_1}}}(\boldsymbol{\Delta_1})$ has the same direction assignment for all subbands. Therefore, it can be realized by \gls{TTD} arrays, in closed form, with the following array configuration matrix \cite{TTT_rainbow_hardware_WSU_II}:

\begin{equation}
\begin{aligned}
      & \boldsymbol{\Phi_{\Delta_1}}=\begin{bmatrix}
        \frac{-\Delta_1 \mathbf{n}}{2 f_c} & \boldsymbol{0}
    \end{bmatrix} \in \mathcal{V}_1\\ 
    &\mathbf{n}=\begin{bmatrix}
        0 & \dots & N-1
\end{bmatrix}^H  
\end{aligned} \label{Eqn: Constant beampattern}
\end{equation}

Furthermore, $\mathbf{P}(\boldsymbol{\Delta_2})$ can be viewed as split beampattern with $G=2$ subbands for center frequency $f^{\Delta_1}_c=f_c-\frac{BW}{2}+\frac{BW}{3}$ and with enlarged bandwidth of $BW^{\Delta_1}=\frac{4BW}{3}$. Therefore, it can be approximated with low complexity $G=2$ subband split beampattern approximation algorithms \cite{structured,modulo,mmflexible,jpta} for appropriate center frequency and bandwidth. This approach does not affect the desired split beampattern as subcarriers beyond the assigned bandwidth can be ignored. Similarly, split beam generators for $G>3$ subbands can be approximated as $G=2$ split beampatterns with the following center frequency and bandwidth:

\begin{equation}
    \begin{aligned}
    & f_{c}^{\Delta_g}=f_c-\frac{BW}{2}+\frac{(g-1)BW}{G}\\
    & BW_{\Delta_g}=2 BW \frac{G-1}{G}\\
    \label{Eqn: fc and BW of generators}
    \end{aligned}
\end{equation}

The split beampattern generator approximation can further be simplified by utilizing the following observation: changing center frequency and bandwidth of a $G=2$ split beampattern to $f^{\Delta_g}$ and $BW^{\Delta_g}$ can be accomplished by solving the following linear equation \cite{modulo}:

\begin{equation}
    \begin{aligned}
        & \forall m:\;-2\pi f_m\mathbf{t}-\boldsymbol{\phi} = -2\pi f^{\Delta_g}_m \mathbf{t}^{\Delta_g}-\boldsymbol{\phi}^{\Delta_g}\\
        &\forall m:\;-2\pi (f_c+\frac{mBW}{M}-\frac{BW}{2})\mathbf{t}-\boldsymbol{\phi} \\
        & \qquad \quad = -2\pi(f^{\Delta_g}_c+\frac{mBW^{\Delta_g}}{M}-\frac{BW^{\Delta_g}}{2})  \mathbf{t}^{\Delta_g}-\boldsymbol{\phi}^{\Delta_g}
    \end{aligned}
\end{equation}

which is obtained from precoder matrix definition in equation \eqref{Def: Omega1} and can be solved as:

\begin{equation}
    \begin{aligned} 
    & \mathbf{t}^{\Delta_g}=\frac{1}{\alpha_g} \mathbf{t}\\
    &\boldsymbol{\phi}^{\Delta_g}=\boldsymbol{\phi}-2 \pi f_c \mathbf{t}+\frac{1}{\alpha_g}2 \pi f_{c}^{\Delta_g} \mathbf{t}\\
    & \alpha_g=\frac{BW^{\Delta_g}}{BW}
    \end{aligned}
    \label{Def: scale and shift}
\end{equation}

Therefore, we can approximate split beampattern generators $\mathbf{P}(\boldsymbol{\Delta_g}),\forall g$ from the approximation of split beampatterns with $G=2$ subbands with a \textbf{linear transformation} over corresponding array configuration matrices, as in equation \eqref{Def: scale and shift}. Hence, If we define a dictionary of array configuration matrices of split beampatterns with $G=2$ subbands, we can approximate any split beampattern generator with a simple linear transformation. We continue with the construction of the $G=2$ subband split beampattern dictionary, which we call \textit{generator dictionary}.

\subsection{Generator Dictionary} \label{Section:Generator Dictionaries}

We construct the generator dictionary, i.e., dictionary of $G=2$ partition split beampatterns $\boldsymbol{\Upsilon} \in \mathbb{R}^{N \times 2 \times 2A}$ for $A$ grid directions in two steps. First, we obtain the array configuration matrices $\boldsymbol{\Phi}_{\underline{\Delta}}$ of approximate split beampatterns with subband to direction mapping $\underline{\boldsymbol{\Delta}}=[\Delta_1,\Delta_2],\; \Delta \in [-1,1]$ with JPTA approximation algorithm \cite{jpta} or with equation \eqref{Eqn: Constant beampattern} if $\Delta_1=\Delta_2$. The fidelity of the approximated split beampatterns is directly related to the fidelity of the approximated split beampatterns. As illustrated in figures \eqref{HDB_generator_1}, gain across subcarriers of each subband is not constant, and the maximum is not necessarily attained at the center of the subband. Hence as a second step, we post-process the split beampatterns such that the maximum gain is attained at the center of each subband by adjusting the bandwidth as $BW^{\underline{\Delta}}=BW\frac{M}{2|\arg\max |[\mathbf{P}_{\boldsymbol{\Phi}_{\underline{\Delta}}}]_{(\Delta_2,:)}|-\arg\max |[\mathbf{P}_{\boldsymbol{\Phi}_{\underline{\Delta}}}]_{(\Delta_1,:)}||}$ with the equation \eqref{Def: scale and shift}. We empirically observe that this post-processing results in uniform gain across subcarriers of the approximated split beampatterns.

Then, utilizing the dictionary $\boldsymbol{\Upsilon}$, we approximate any generator beampattern $\mathbf{P}(\boldsymbol{\Delta_g})$ by simply reading the dictionary for the corresponding directions and scaling its center frequency and bandwidth with equations \eqref{Eqn: fc and BW of generators} and \eqref{Def: scale and shift}.

\subsection{Homomorphic Directional Beamforming Algorithm}

We have shown that split beam generators can be efficiently approximated simply by reading the corresponding \gls{TTD} array configuration from the dictionary $\boldsymbol{\Upsilon}$ followed by a trivial linear transformation defined in equation \eqref{Def: scale and shift}. Since the desired split beampattern is represented by generator beampatterns as in equation \eqref{eqn: split star representation} and approximated generator beampatterns are in \gls{TTD} beampatterns group, we can utilize the observed homomorphism and obtain array configuration matrix of the desired split beampattern from the equation \eqref{eqn: config matrix represnt.}. Therefore, instead of approximating the desired split beampattern directly, we efficiently approximated its generators and obtained the corresponding array configuration matrix from the observed homomorphism.

This basically reduces the split beampattern generation process to solving simple linear equations (\ref{matrix_split_unwrap}) and adding the time delay and phase shifter values of the corresponding generators as in equation \eqref{eqn: time and phase representation}. \gls{HDB} algorithm is summarized in algorithm \eqref{Alg: HDB}. Figures (\ref{HDB_generator_1},\ref{HDB_generator_2}) illustrate the approximated generators $\mathbf{P}(\boldsymbol{\Delta_{2}})$ and $\mathbf{P}(\boldsymbol{\Delta_{3}})$ for $G=3$ with mmFlexible algorithm \cite{mmflexible} and figure (\ref{HDB_aprox}) illustrates the resulting beampattern $\mathbf{P}(\boldsymbol{\Delta_2}+\boldsymbol{\Delta_3})$ after adding the array configuration matrices of the approximated generators.

\begin{algorithm}\label{Alg: HDB}

	\KwIn{Desired split beampattern $\mathbf{P}(\boldsymbol{\varphi})$ with $\boldsymbol{\varphi}=[\varphi_1,\dots,\varphi_G]^{H} \in \mathbb{R}^{G \times 1}$} 
	\KwOut{Array configuration vector $\boldsymbol{\Phi}_{\boldsymbol{\varphi}}$ such that $\Omega_3(\boldsymbol{\Phi}_{\boldsymbol{\varphi}})=\mathbf{P}_{\boldsymbol{\Phi}_{\boldsymbol{\varphi}}} \approx \mathbf{P}(\boldsymbol{\varphi})$ }

	Solve the recursion in the equation (\ref{Eqn: HDB_recursion}) and obtain $\Delta_g, \;g\in [1,G]$ from the equation \eqref{Eqn: unwrap_angles}
 
    Obtain $\boldsymbol{\Phi_{\Delta_{g}}},\;\forall g$ that approximates the generator $\mathbf{P}_{\boldsymbol{\Delta_{g}}}$ from the dictionary $\boldsymbol{\Upsilon}$ with equation \eqref{Eqn: fc and BW of generators} and \eqref{Def: scale and shift}

    Add time delay and phase shifter vectors, equation (\ref{eqn: time and phase representation}): $\mathbf{t}=\sum_{l=1}^{G} \mathbf{t}^{g_{l}},\boldsymbol{\phi}=\sum_{l=1}^{G} \boldsymbol{\phi}^{g_{l}}$

    \Return{$\boldsymbol{\Phi}_{\boldsymbol{\varphi}}=\begin{bmatrix}
        \mathbf{t} & \boldsymbol{\phi}
    \end{bmatrix},$}
    
	\caption{\gls{HDB} Algorithm}
\end{algorithm}

\subsection{Effect of Generator Approximation}

As discussed, the performance of the \gls{HDB} algorithm depends on the fidelity of the approximated generators. As the \gls{HDB} algorithm effectively relies on the addition of the subband directions of each split beampattern generator, any direction error inherited from dictionary construction, such as errors from the JPTA approximation algorithm \cite{jpta} due to the increasing number of antennas or bandwidth, can accumulate. We investigate the effect of error accumulation in Section \eqref{Section: Performance Analysis} through extensive simulations. Furthermore, the approximation of generator beampatterns is empirically observed \cite{jpta} and systematically analyzed \cite{structured} to result in the weighted addition of two directional beams as shown in figures \eqref{HDB_generator_1}, notably observable at the transition between subbands. We design the generator beampatterns to avoid coinciding transition regions of different generator beampatterns so that approximated generators preserve the split beampattern generator's behavior.

\section{Performance Analysis} \label{Section: Performance Analysis}

\begin{table}[t!]
\caption{Simulation Parameters}
\label{table:simulation_parameters}
\setlength{\tabcolsep}{3pt}
\begin{tabular}{|p{155pt}|p{75pt}|} 
\hline
\textbf{\textsc{Parameter}}& \textbf{\textsc{Value}} \\
\hline
Center frequency $f_c$ & $28$ GHz \\
Bandwidth $BW$ & $1:10$ GHz \\
Pre-beamforming SNR & $10$ dB\\
Number of subcarriers $M$ & $1200$ \\
Number of \gls{ULA} antenna $N$ & $[8,16,32,64,128]$  \\
Number of \gls{UE}s $G$ & $[3,4,5,6,7,8]$\\
\gls{UE} direction grid numbers $A$ & $499$\\
\gls{UE} directions $\varphi_d,\;d \in [1,G]$ & $\sim \mathcal{U}_{A}[-1,1]$ \\
Number of Monte Carlo Locations & $5000$ \\
JPTA \cite{jpta} Approximation Maximum Delay & $\frac{M}{BW}=3.6 \; \mu s$\\
JPTA \cite{jpta} Approximation iteration $I_A$ & $30$\\
\hline
\end{tabular}
\end{table}

This section demonstrates a detailed analysis of the performance of \gls{HDB} algorithm compared to the benchmark algorithms. Numerical simulations are conducted for an \gls{OFDM} system with an \gls{ULA} comprising $N=16$ antenna, operating at a carrier frequency of $f_c=28$ GHz with bandwidth $BW=3$ GHz. The system has $M=1200$ subcarriers and serves $G=3$ \gls{UE}s with pre-beamforming SNR of $10$ dB unless otherwise stated. Monto Carlo simulations are conducted over $5000$ different \gls{UE} direction configurations, where \gls{UE} directions are drawn from discrete uniform random variable over the domain $[-1,1]$ with $A=499$ discrete directions i.e. $\varphi_d \sim \mathcal{U}_{A}[-1,1]$ for every \gls{UE} $d \in [1,G]$. The system configuration and benchmark algorithm parameters are summarized in Table \eqref{table:simulation_parameters}, and the following benchmark schemes are discussed for comparison:
\\

\begin{itemize}
    \item \textbf{mmFlexible \cite{mmflexible} Math}: Convex approximation based solution of the problem \eqref{eq: obj_fun_2}, ignores beam-squint. The algorithm provides a closed-form solution with time complexity $\mathcal{O}(NG)$ where $G$ is the number of \gls{UE}s. 
    \item \textbf{FSDA \cite{mmflexible}}: Exhaustive search-based solution of the problem \eqref{eq: obj_fun_2}, ignores beam-squint. 
    \item \textbf{JPTA \cite{jpta} Line Search}: Exhaustive search-based solution of the problem \eqref{eq: obj_fun_1}.
    \item \textbf{JPTA \cite{jpta} Approximation}: Convex approximation based solution of the problem \eqref{eq: obj_fun_1}. The algorithm iteratively solves the approximated problem with a $\mathcal{O}(MN)$ complexity in each iteration.
\end{itemize}
    
The \textit{Spectral Efficiency (SE)} is considered as a performance metric which is defined as follows for the target split beampattern with subband direction mapping vector $\boldsymbol{\varphi}=\mathbb{C}^{G \times 1}$:

\begin{equation}
    \text{SE}(m,\boldsymbol{\varphi})=\log_2(1+\frac{|[\overline{\mathbf{P_{\varphi}}}]_{([\boldsymbol{\varphi}]_{\lceil\frac{m}{M'}\rceil},m)}|^2}{\sigma^2})
    \label{eqn: spectral efficiency}
\end{equation}

where $\overline{\mathbf{P_{\varphi}}}$ is the synthesized beampattern and $\sigma^2$ is the variance of the noise, assuming no-pathloss $\sigma^2=\frac{1}{SNR}$.


\begin{table*}[t!]
\caption{Practical Deployment Requirements of Proposed the \gls{HDB} and Benchmark Algorithms.}
 \label{tab:compexity}
    \centering
\begin{tabular}{||c|| c| c| c|c||} 
 \hline
Algorithms & Computational Complexity  & Dict. Size & Dict. Memory for 3 \gls{UE} & Average Runtime for 3 \gls{UE} \\ 
 \hline\hline
Proposed \gls{HDB} & $\mathcal{O}(NG)$ &  $\mathbb{R}^{2\times N \times 2A}$ & $\approx 62$ KB & $2.4 \times 10^{-8}$ s\\ 
 \hline
JPTA Approximation \textbf{[Online]} \cite{jpta}  & $\mathcal{O}(MN I_A)$ & 0 & $0$ & $4.7\times 10^{-2}$ s\\
 \hline
 mmFlexible Math \textbf{[Online]} \cite{mmflexible}  & $\mathcal{O}(NG)$ & $0$ & 0  & $3.7 \times 10^{-8}$ s \\ 
 \hline
 JPTA Approximation \textbf{[Dictionary]} \cite{jpta}  & $\mathcal{O}(1)$ & $\mathbb{R}^{2\times N \times A^G}$ & $ \approx 7$ GB & NR \\
 \hline
  JPTA Line Search \textbf{[Dictionary]} \cite{jpta}   & $\mathcal{O}(1)$& $\mathbb{R}^{2\times N \times A^G}$ &$ \approx 7$ GB & NR \\
 \hline
FSDA \textbf{[Dictionary]} \cite{mmflexible}  & $\mathcal{O}(1)$ & $\mathbb{R}^{2\times N \times A^G}$ & $ \approx 7$ GB  & NR \\
 \hline

\end{tabular}
\end{table*}

\subsection{Computational Complexity and Memory Analysis}

In this section, we investigate the computational complexity and memory requirements of \gls{HDB} and benchmark algorithms. The proposed \gls{HDB} algorithm has the complexity of $\mathcal{O}(NG)$ and requires a dictionary of size $\mathbb{R}^{2\times N \times 2A}$. We discuss two different deployment scenarios of the benchmark algorithms: \textit{online} and \textit{dictionary-based}. The \textit{online} deployment refers to the case where algorithms calculate array configuration matrices in real-time, and \textit{dictionary-based} deployment refers to the case where the array configuration matrices are obtained from the pre-calculated configuration matrix dictionary of all possible \gls{UE} configurations. FSDA \cite{mmflexible} and JPTA line search \cite{jpta} algorithms are exhaustive search algorithms and are impractical for \textit{online} deployment; therefore, we only discuss \textit{dictionary-based} implementation of these algorithms in the following sections and table \eqref{tab:compexity}. 

The \textit{dictionary-based} deployment of the JPTA \cite{jpta} and mmFlexible \cite{mmflexible} algorithms require the storage of the array configuration matrix of all possible \gls{UE} direction configurations, amounting to $A^G$ possible cases. This corresponds to a matrix of size $\mathbb{R}^{2 \times N \times A^G}$. Even for a small number of \gls{UE}s, such as for $G=3$ and for system configuration at table \eqref{table:simulation_parameters}, the required dictionary size is $\approx 7$ GB for $64$ bit floating-point variables, and it increases \textbf{exponentially} with the number of \gls{UE}s, making \textit{dictionary-based} deployment impractical. On the contrary, the \gls{HDB} algorithm only requires a negligible $62$ KB for the same system configuration, and the size of the required memory is \textbf{independent} of the number of \gls{UE}s. The dictionary size and memory requirements are summarized in the table \eqref{tab:compexity}.

As the \textit{dictionary-based} deployment is shown to be impractical, we continue with \textit{online} deployment cases of the benchmark algorithms. In such a scenario, the computational complexity of the JPTA Approximation algorithm \cite{jpta} is $\mathcal{O}(N M I_A)$ and the computational complexity of both \gls{HDB} and mmFlexible math algorithms is $\mathcal{O}(NG)$. Since the number of subcarriers is much larger than the number of \gls{UE}s, i.e., $M>>G$, the computational complexity of the JPTA Approximation \cite{jpta} is expected to be prohibitively higher.

To further assess the computational requirements, we analyze the average runtime of the \gls{HDB} and \textit{online} deployment cases of the benchmark algorithms. For simulations, we utilize MATLAB \textit{version 23.2.0.2428915 (R2023b) Update 4} on the \textit{Linux Ubuntu} operating system comprising \textit{AMD Ryzen Threadripper 3970X 32-Core} Processor. Figure \eqref{Fig: runtime} shows the average runtime for different numbers of \gls{UE}s, and the table \eqref{tab:compexity} demonstrates the average runtime for $G=3$ \gls{UE}s. Consistent with the complexity analysis above, the JPTA Approximation \cite{jpta} algorithm, even for a single iteration, is orders of magnitude more computationally demanding than mmFlexible math \cite{mmflexible} and the proposed \gls{HDB} algorithms. Both \gls{HDB} and mmFlexible math \cite{mmflexible} algorithms perform similarly with very low runtime, making both algorithms low-complexity and practical alternatives to the JPTA Approximation algorithm \cite{jpta}. 

\begin{figure}[t!]
\centering
\includegraphics[width=0.50\textwidth]{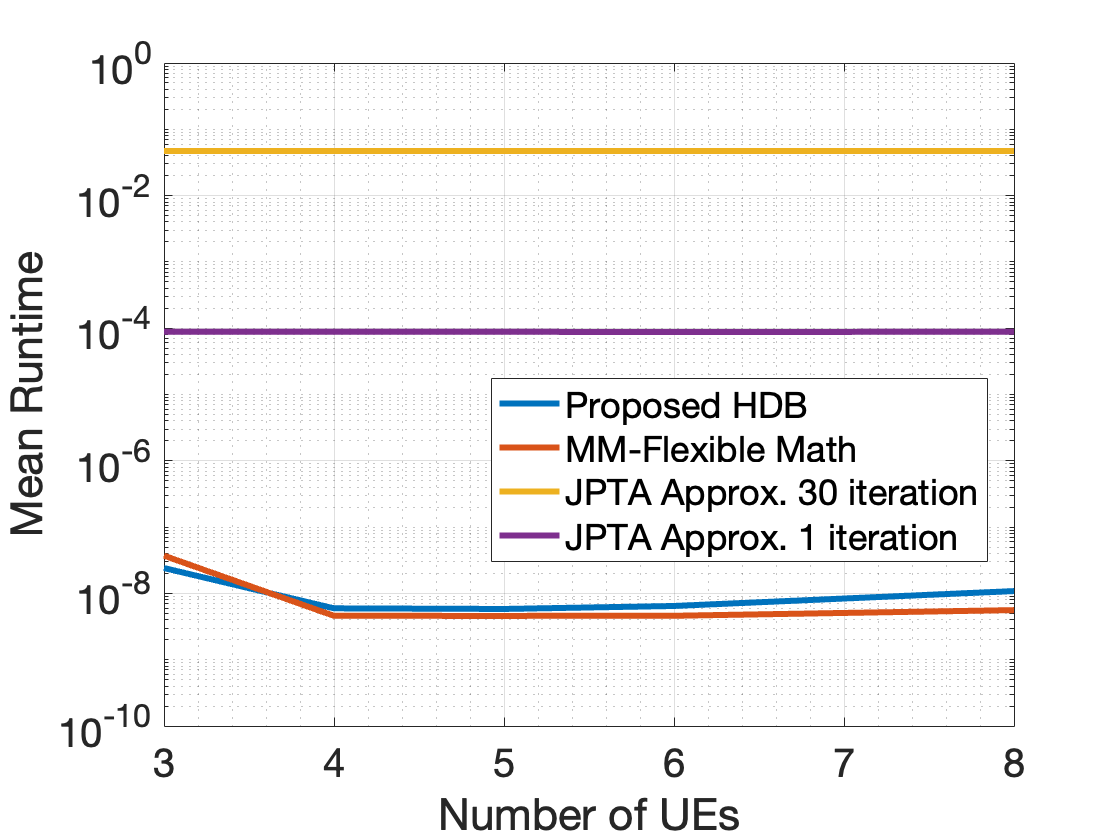}
\caption{Average runtime of the proposed \gls{HDB} algorithm and benchmark algorithms for different numbers of \gls{UE}s, $G$, for $N=16$ antennas and bandwidth $BW=3$ GHz. Exhaustive search algorithms are not included, and single iteration, $I_A=1$, JPTA Approximation algorithm \cite{jpta} is included for comparison.}
\label{Fig: runtime}
\end{figure}

\subsection{Performance Analysis Over Subcarriers}

We continue our analysis with the evaluation of the \textit{spectral efficiency} given in the equation \eqref{eqn: spectral efficiency} of the \gls{HDB} and \textit{online} deployment of the benchmark algorithms for various parameters, including the number of \gls{UE}s, antennas, and bandwidth.

Figure \eqref{Fig: bar_3_users} shows the \gls{ASE} over different subbands for $3$ \gls{UE}s of $5000$ Monte Carlo realizations. We observe that all algorithms perform very close to the maximum \gls{ASE} at subbands $1$ and $3$ where JPTA approximation \cite{jpta}, mmFlexible math \cite{mmflexible}, and \gls{HDB} algorithms achieve approximately $95\%,93\%,90\%$ of the maximum \gls{ASE}, respectively. However, in the subband $2$, JPTA approximation algorithm \cite{jpta} with $I_A=30$ iterations and mmFlexible math algorithm \cite{mmflexible} achieves $83.35\%$ and $84.53\%$ of the maximum \gls{ASE}, providing unfair \gls{ASE} compared to the users assigned to subbands $1$ and $3$. Additionally, we observe that the JPTA approximation algorithm \cite{jpta} has improved performance at subband $2$ with the increasing number of iterations. On the other hand, \gls{HDB} algorithm achieves $89.55\%$ of the maximum \gls{ASE} at the subband 2, providing similar and near-optimal performance across different subbands.

\begin{figure}[t!]
\centering
\includegraphics[width=0.50\textwidth]{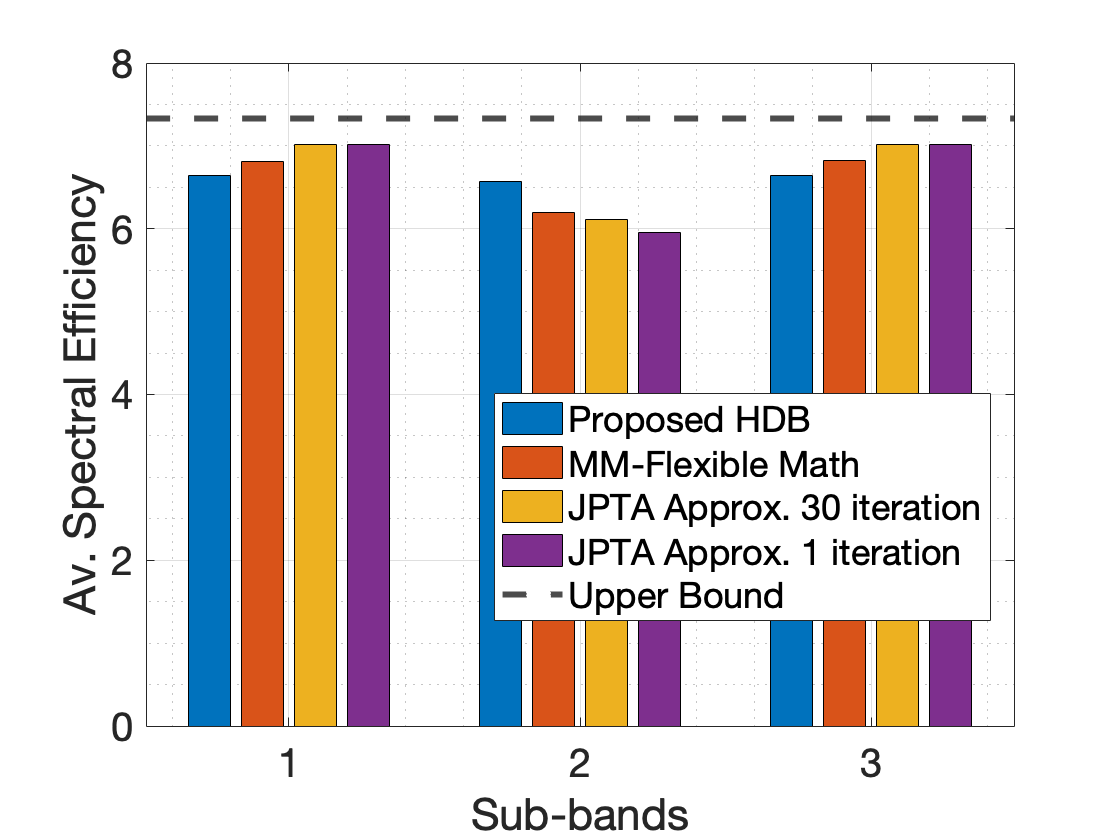}
\caption{\gls{ASE} of the proposed \gls{HDB} algorithm and benchmark algorithms across subbands for $G=3$ \gls{UE}s, $N=16$ antennas and bandwidth of $BW=3$ GHz. The \textit{Upper Bound} refers to the \gls{ASE} achieved with the maximum beamforming gain.}
\label{Fig: bar_3_users}
\end{figure}

\begin{figure}[t!]
\centering
\includegraphics[width=0.50\textwidth]{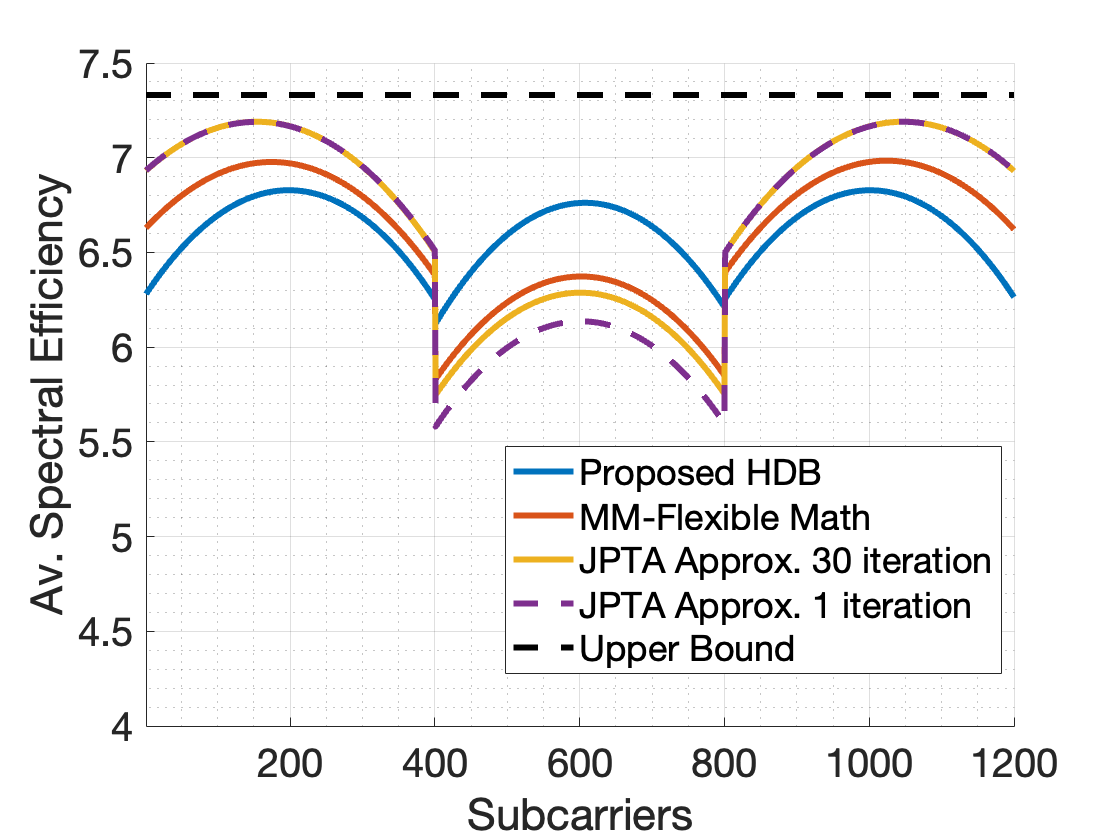}
\caption{\gls{ASE} of the proposed \gls{HDB} algorithm and benchmark algorithms across different subcarriers for $G=3$ \gls{UE}s, $N=16$ antennas and bandwidth of $BW=3$ GHz. The \textit{Upper Bound} refers to the \gls{ASE} achieved with the maximum beamforming gain.}
\label{Fig: gain_3_users}
\end{figure}

\begin{figure}[t!]
\centering
\includegraphics[width=0.50\textwidth]{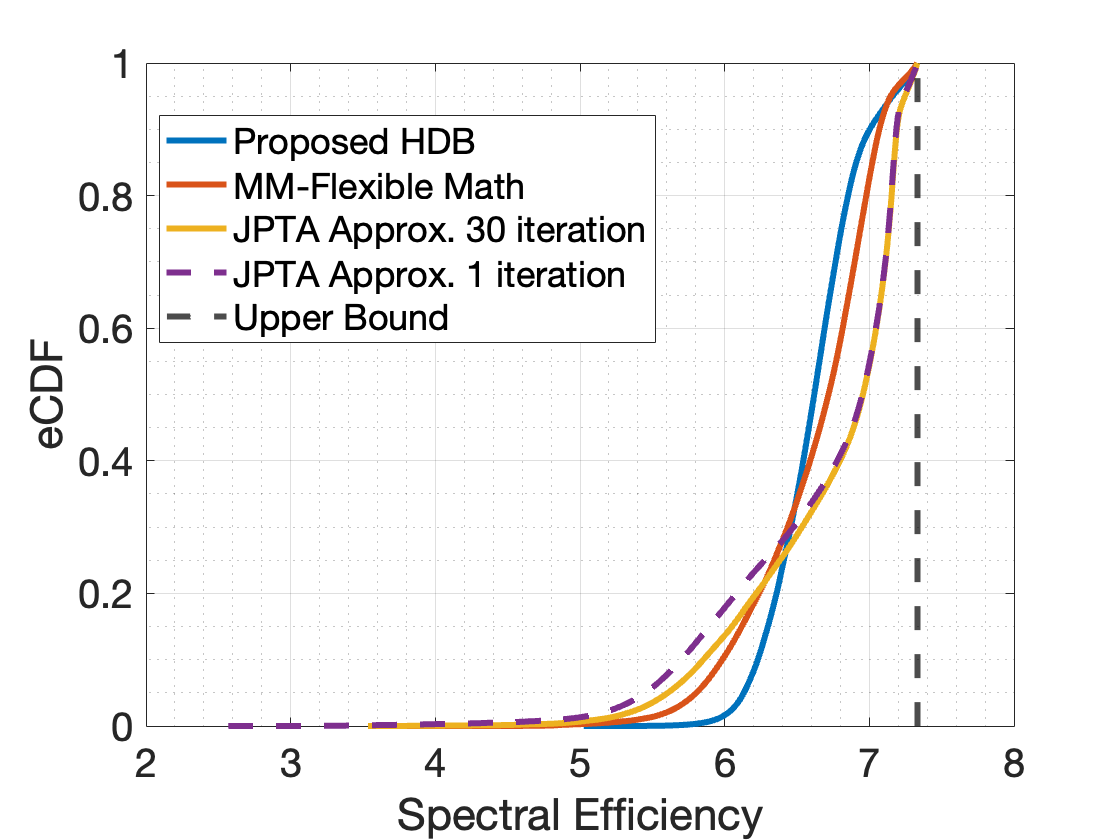}
\caption{Empirical cumulative distribution of the spectral efficiency across different subcarriers and \gls{UE} direction configurations for the proposed \gls{HDB} algorithm and benchmark algorithms for $3$ \gls{UE}s for $G=3$ \gls{UE}s, $N=16$ antennas and bandwidth of $BW=3$ GHz. The \textit{Upper Bound} refers to the \gls{ASE} achieved with the maximum beamforming gain.}
\label{Fig: ecdf_3_users}
\end{figure}

Figure \eqref{Fig: gain_3_users} demonstrates the \gls{ASE} over different sub-carriers for $3$ \gls{UE}s. We observe that subcarriers close to the center of each subband have higher spectral efficiency compared to subcarriers at the edge of each subband. JPTA approximation algorithm \cite{jpta} has a slightly off-center maximum at each subband and has a significant difference between \gls{ASE} across subcarriers: the maximum SE is $25\%$ and $28\%$ higher than the minimum SE for $I_A=30$ and $I_A=1$ iterations, respectively. Similarly, the mmFlexible math \cite{mmflexible} algorithm achieves the maximum at the slightly off-center of each subband, and the maximum SE is $19\%$ higher than the minimum SE. On the contrary, the proposed \gls{HDB} algorithm achieves its maximum close to the center for all subbands, with the maximum SE being only $7\%$ higher than the minimum SE. Results indicate that the proposed \gls{HDB} algorithm provides a more uniform \gls{ASE} distribution across subcarriers.

Figure \eqref{Fig: ecdf_3_users} shows the empirical cdf of the \gls{ASE} across different sub-carriers and \gls{UE} locations for $3$ \gls{UE}s. Across all realizations, i.e., subcarriers and \gls{UE} locations, \gls{HDB} algorithm provides higher minimum \gls{ASE} and less variance compared to benchmark algorithms. JPTA approximation $I_A=30$ iteration and mmFlexible math algorithms achieve less than $6$ bps/Hz spectral efficiency for $13\%$ and $10\%$ of the realizations, while \gls{HDB} achieves less than $6$ bps/Hz spectral efficiency for $1\%$ of the realizations. Results demonstrate that \gls{HDB} algorithm provides consistent spectral efficiency across all realizations, making it a reliable alternative to the benchmark algorithms.

\subsection{Impact of Array Parameters}

In this section, we assess the \gls{HDB} and benchmark algorithms for different array parameters and number of \gls{UE}s.

\begin{figure}[t!]
\centering
\includegraphics[width=0.50\textwidth]{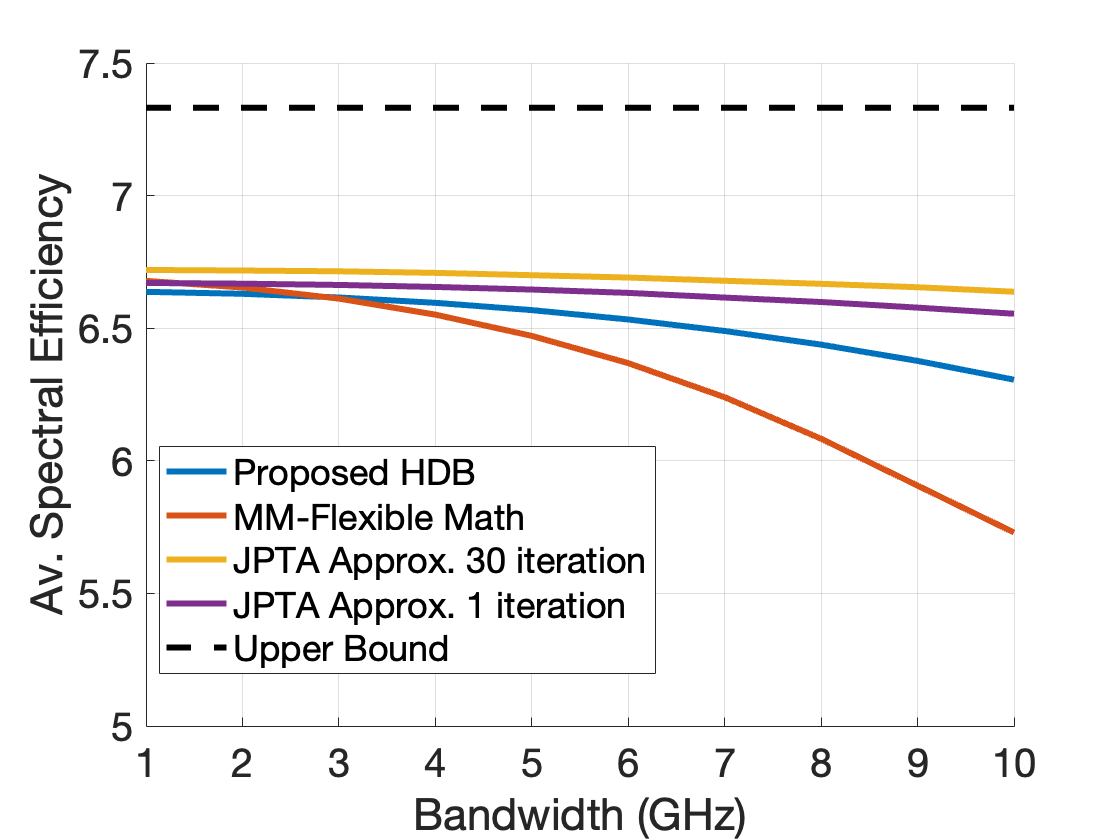}
\caption{\gls{ASE} of the proposed \gls{HDB} algorithm and benchmark algorithms for different bandwidths $BW$ for $G=3$ \gls{UE}s, $N=16$ antennas. The \textit{Upper Bound} refers to the \gls{ASE} achieved with the maximum beamforming gain.}
\label{Fig: BW 3 users}
\end{figure} 

\begin{figure}[t!]
\centering
\includegraphics[width=0.50\textwidth]{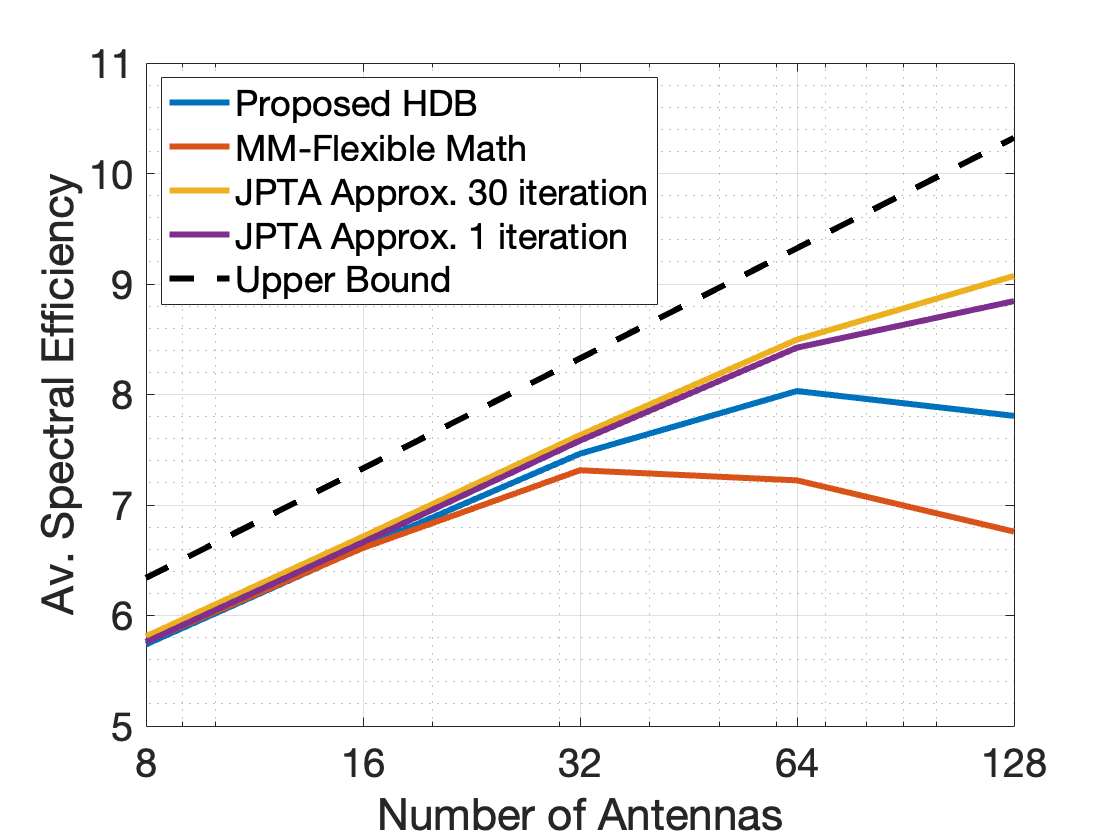}
\caption{\gls{ASE} of the proposed \gls{HDB} algorithm and benchmark algorithms for different number of antennas $N$ for $G=3$ \gls{UE}s and $BW=3$ GHz bandwidth. The \textit{Upper Bound} refers to the \gls{ASE} achieved with the maximum beamforming gain.}
\label{Fig: Ant 3 users}
\end{figure}

\begin{figure}[t!]
\centering
\includegraphics[width=0.50\textwidth]{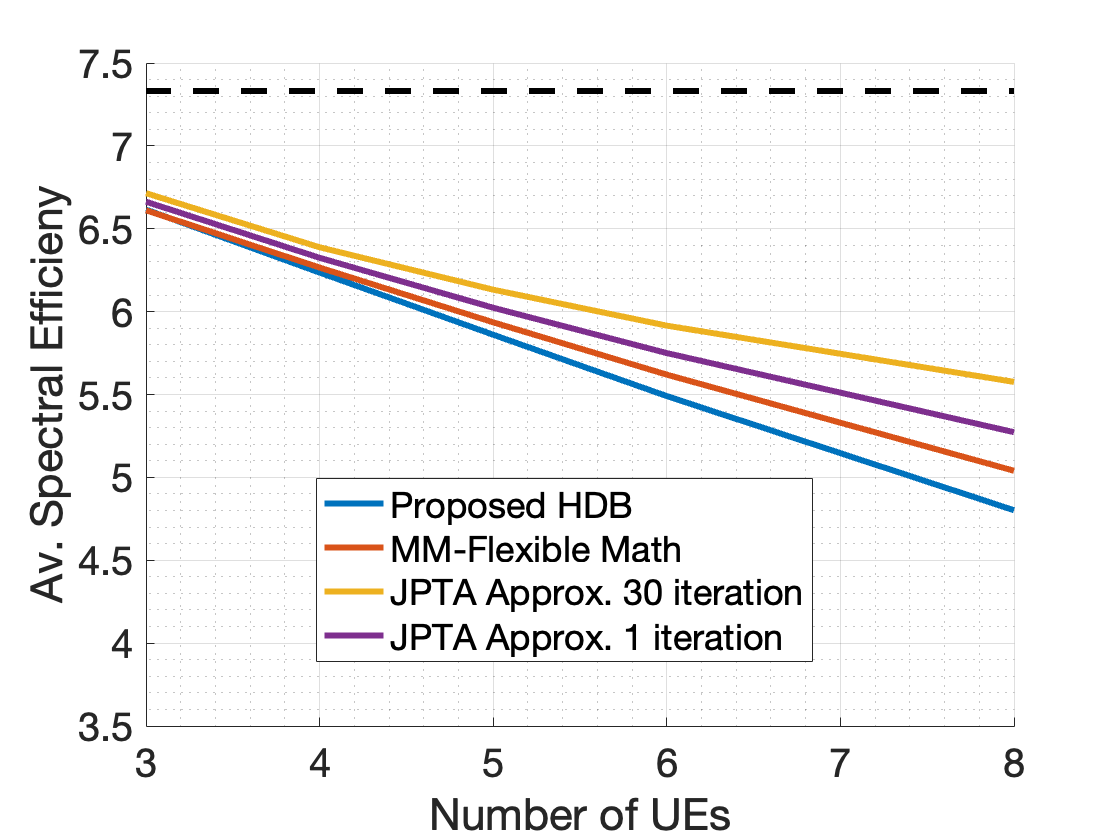}
\caption{\gls{ASE} of the proposed \gls{HDB} algorithm and benchmark algorithms for different number of \gls{UE}s $G$ for $N=16$ antennas and $BW=3$ GHz bandwidth. The \textit{Upper Bound} refers to the \gls{ASE} achieved with the maximum beamforming gain.}
\label{Fig: Effect of UE}
\end{figure} 

Figure \eqref{Fig: BW 3 users} shows the \gls{ASE} over different $BW$ values for $G=3$ \gls{UE}s and $N=16$ antennas. We can observe that the JPTA approximation algorithm \cite{jpta} performs the best across all bandwidth values and is minimally affected by the increased bandwidth. While the mmFlexible algorithm's \cite{mmflexible} performance degrades significantly with increased bandwidth, reducing from $90.98\%$ of the maximum \gls{ASE} at $1$ GHz to $78.16\%$ of the maximum \gls{ASE} at $10$ GHz bandwidth as a result of the beamsquint, which increases with bandwidth. \gls{HDB} algorithm performs almost identically to the benchmark algorithms for $BW<3$ GHz and gets less affected by the increased bandwidth compared to the mmFlexible algorithm \cite{mmflexible}. This reduction results from the accumulation of generator errors, where the JPTA approximation \cite{jpta} algorithm is utilized for generator dictionary creation. Results indicate that \gls{HDB} algorithm outperforms mmFlexible math \cite{mmflexible} algorithm for $BW>3$ GHz and is less affected by the increased bandwidth. 

Figure \eqref{Fig: Ant 3 users} demonstrates the \gls{ASE} over different numbers of antennas for $G=3$ \gls{UE}s and $BW=3$ GHz bandwidth. \gls{HDB} algorithm and benchmark algorithms perform almost identical for $N\leq 16$ antennas. We observe that the performance of the JPTA approximation algorithm \cite{jpta} scales almost linearly with the number of antennas $N$, deviating slightly for $N \geq 64$. In contrast, mmFlexible math algorithm \cite{mmflexible} and \gls{HDB} algorithms plateau and degrade after $N=32$ and $N=64$ antennas, respectively. As the number of antennas increases, the corresponding beamwidth decreases, and algorithms become more sensitive to errors, i.e., beam squint as in mmFlexible \cite{mmflexible} and generator error accumulation as in \gls{HDB} algorithm. We observe that \gls{HDB} algorithm outperforms the mmFlexible math algorithm for $N>32$ antennas.

Figure \eqref{Fig: Effect of UE} shows the \gls{ASE} over different numbers of \gls{UE}s for $N=16$ antennas and $BW=3$ GHz bandwidth. We observe that \gls{HDB} and benchmark algorithms perform similarly for $G\leq 4$ \gls{UE}s and \gls{ASE} decreases with the number of \gls{UE}s for all algorithms. Results indicate that utiliziation of split beampatterns is more suitable for small number of \gls{UE} deployments.

\section{Discussion and Future Directions} \label{Section: Discussion and Future Directions}

The \gls{HDB} algorithm is a low complexity and low-memory algorithm for split beampattern synthesis that is suitable for practical deployment. We showed that the proposed \gls{HDB} algorithm is less affected by the increased bandwidth and number of antennas compared to the mmFlexible algorithm \cite{mmflexible} while having similar computational complexity and negligible memory requirements. In addition, we empirically observe that \gls{HDB} algorithm provides higher minimum gain and fairness among users due to the post-processing of the generator beampattern dictionary discussed in Section \eqref{Section:Generator Dictionaries}. This property not only makes the \gls{HDB} more attractive to the systems that are aiming for fair service across \gls{UE}s but can also reduce the complexity of the upper layer scheduling and assignment algorithms.

In the future, we plan to optimize the shift and scale amount in equation \eqref{Eqn: fc and BW of generators} for various objective functions over the desired \gls{UE} configurations, adapting the \gls{HDB} algorithm for different deployment scenarios. The optimization of the generator beampattern dictionary for different objective functions is also an open problem. Lastly, we believe that the investigation of the extension of the observed mathematical structure to hybrid \gls{TTD} arrays is an interesting research direction.

\section{Conclusion} \label{Section: Conclusion}

In this article, we present a low-complexity and low-memory split beampattern approximation algorithm utilizing the mathematical structure of the beampattern synthesis operation. We first demonstrate the homomorphism between array configuration matrices and corresponding \gls{TTD} beampatterns. Then, we represent the desired hard-to-approximate split beampattern in terms of easy-to-approximate generator beampatterns. Finally, by utilizing a single generator beampattern dictionary, we approximate the desired split beampattern with the observed homomorphism in a divide-and-conquer manner. The proposed algorithm achieves comparable performance to the high-complexity or high-memory JPTA algorithms \cite{jpta} with orders of magnitude smaller memory requirements, without ignoring the beam squint as mmFlexible algorithm \cite{mmflexible}. Furthermore, with extensive simulations, we show that the proposed algorithm achieves higher minimum power and fairness across different \gls{UE}s compared to benchmark algorithms.

\section*{Acknowledgment}

The authors would like to thank Aditya Wadaskar and Ding Zhao for helpful early discussions.

\bibliography{references}
\bibliographystyle{ieeetr}  
\end{document}